\newtheorem{thm}{Theorem}[section]
\newtheorem{defn}{Definition}[section]
\DeclareMathOperator*{\E}{\mathbb{E}}
\DeclareMathOperator*{\Prob}{\mathbb{P}}
\begin{document}
\title{Improved Pan-Private Stream Density Estimation}
\titlenote{A preliminary version of this work was performed in \cite{digalakis2018thesis}.}

\author{Vassilis Digalakis Jr}
\authornote{Work performed while the author was at the Technical University of Crete.}
\affiliation{
Massachusetts Institute of Technology}
\email{vvdig@mit.edu}
\author{George N. Karystinos}
\affiliation{
Technical University of Crete}
\email{karystinos@telecom.tuc.gr}
\author{Minos N. Garofalakis}
\affiliation{Athena Research Center \\ 
Technical University of Crete}
\email{minos@athenarc.gr}

\begin{abstract}
Differential privacy is a rigorous definition for privacy that guarantees that any analysis performed on a sensitive dataset leaks no information about the individuals whose data are contained therein.
In this work, we develop new differentially private algorithms to analyze streaming data.
Specifically, we consider the problem of estimating the density of a stream of users (or, more generally, elements),
which expresses the fraction of all users that actually appear in the stream.
We focus on one of the strongest privacy guarantees for the streaming model, namely user-level pan-privacy, which ensures that the privacy of any user is protected,
even against an adversary that observes, on rare occasions, the internal state of the algorithm.
Our proposed algorithms employ optimally all the allocated \enquote{privacy budget}, are specially tailored for the streaming model,
and, hence, outperform both theoretically and experimentally the conventional sampling-based approach.
\end{abstract}



\keywords{Differential privacy; pan-privacy; streaming algorithms}

\maketitle

\section{Introduction}
The importance of privacy in the era of Big Data is well-understood.
Introduced in 2006 by Cynthia Dwork, Frank McSherry, Kobi Nissim, and Adam Smith \cite{dwork2006calibrating}, 
\emph{differential privacy} is a strong and mathematically rigorous guarantee,
that describes a promise, made by a data holder, 
to an individual that contributes its data to a dataset, 
that the individual's privacy will be protected.
As Dwork and Roth \cite{dwork2014algorithmic} argue, differential privacy addresses the paradox of learning nothing about an individual while learning useful information about a population.
In addition to being a strong privacy guarantee, differential privacy as a definition has mathematical properties
that facilitate the design of algorithms which satisfy it.
As a result, numerous methods that realize the differential privacy guarantee have been developed in the past few years,
and now we are definitely one step closer to privacy-preserving data analysis.

At the same time, the paradigm of our data living in a static database that is protected by a curator (e.g. database administrator) belongs to the past to a large extent.
Nowadays, our data are everywhere and in various forms. For instance, they may be dynamically created and arrive continuously in a stream \cite{muthukrishnan2005data,garofalakis2016data}. 
Being able to monitor such a stream and extract statistics is important for many disciplines, such as epidemiology and real-time health monitoring.
We mention those particular applications due to their obvious connections with privacy; medical data are by definition sensitive.

\paragraph*{Outline \& Contributions:}
In this paper, we give a detailed and unified presentation of differential privacy for static and streaming datasets, and emphasize some points that often cause confusion in the literature.
We develop differentially private algorithms to analyze streaming data and address the particular problem of estimating the density of a stream
of users.
We offer one of the strongest privacy guarantees for the streaming model,
namely user-level pan-privacy,
which ensures that the privacy of any user is protected,
even against an adversary that observes, on rare occasions,
the internal state of the algorithm.

The key contributions of the paper are the following. 
We formally describe differential privacy in the streaming model and analyze in-depth the existing definitions and approaches. 
We provide for the first time a detailed analysis of the sampling-based, pan-private density estimator, proposed by Dwork et al. \cite{dwork2010pan} and
identify its main limitation, in that it does not use all the allocated privacy budget.
We examine a novel approach to modifying the original estimator,
based on optimally tuning the Bernoulli distributions it uses, and
analyze the theoretic guarantees that our modified estimator offers.
Next, we further modify the estimator by replacing the static sampling step it performs with an adaptive sampling approach that is specially tailored for the particular problem we consider (namely, Distinct Sampling);
our proposed estimator, which we call Pan-Private Distinct Sampling, significantly outperforms the original one.
Finally, we experimentally compare our algorithms.

\section{System Models \& Definitions}
In this section, we introduce the models we consider, namely the \emph{static} and the \emph{streaming} model, as well as the associated privacy definitions and mechanisms.
Before we begin, we make a few notational remarks.
By $\log$ we refer to the natural logarithm, whereas by $\log_2$ we refer to the logarithm with base $2$.
We slightly abuse notation and use, for example, $\text{Bernoulli}(p)$ to refer to a Bernoulli random variable with parameter $p$ (instead of $X \sim \text{Bernoulli}(p)$).
Similarly, we use $\text{Laplace}(\mu,b)$ to refer to a random variable that follows the Laplace Distribution with mean $\mu$ and variance $2b^2$, which is a symmetric, double-sided version of the exponential distribution and is defined as follows.
\begin{defn}[Laplace distribution] \label{defn_laplace}
A random variable $X \sim \text{Laplace}(\mu,b)$ ($b>0$) has probability density function:
$$f_X(x|\mu,b) = \frac{1}{2b} e^{-\frac{|x-\mu|}{b}} \ , \ x \in \mathbb{R}.$$
\end{defn}

\subsection{Differential Privacy for Static Datasets}
In this section, we formalize the notion of differential privacy for static datasets
and describe the most common mechanisms that are used as building blocks to provide the differential privacy guarantee.

\subsubsection{The Static Model}
We consider a model of computation, where a database or, more generally, a dataset $D$, i.e., a finite and, without loss of generality (w.l.o.g.), ordered collection of $T$ records/tuples/data points (the terms are used interchangeably) from a finite universe $\mathcal{D} = \mathcal{U} \times \mathcal{V}$,
contains the sensitive data $\mathbf{v}=(v_1,...,v_T)$ of data owners $\mathbf{u}=(u_1,...,u_T)$.
Each record in $D$  is a pair $D_t=(u_t,v_t)$ and it is assumed that both user ids and values belong in arbitrarily large but finite alphabets,
i.e., $\forall t$, $u_t \in \mathcal{U}$ and $v_t \in \mathcal{V}$, $|\mathcal{U}|,|\mathcal{V}|<\infty$.
Therefore, it is possible that multiple records correspond to the same user,
although, in practice, usually a single record corresponds to each user.
We also introduce the function $\text{hist}(D): \mathcal{D}^T \rightarrow \mathbb{N}^{|\mathcal{U} \times \mathcal{V}|}$
which is called the histogram representation of $D$ and
computes the (unnormalized) type (or frequency distribution) \cite{cover2012elements} of $D$, that is, the number of occurrences in $D$ of each record from $\mathcal{D}$.
The notation we just described is summarized in Table \ref{table notation 1}.

\begin{table}
\renewcommand{\arraystretch}{1.3}
\setlength{\tabcolsep}{2pt}
\centering
\begin{tabular}{|c|c|}
\hline
\textbf{Notation} & \textbf{Description}\\
\hline
\hline
$u_t \in \mathcal{U} = \{1,...,U\}$ & Universe of users/keys\\
\hline
$v_t \in \mathcal{V} = \{1,...,V\}$ & Universe of values\\
\hline
$D_t = (u_t,v_t) \in \mathcal{D} $ & $t$-th record \\
\hline
\begin{tabular}{c}
$D = (D_1,...,D_T) = ( \mathbf{u}_{T \times 1} , \mathbf{v}_{T \times 1} ), $ \\
$D \in \mathcal{D}^T = (\mathcal{U} \times \mathcal{V})^T$ 
\end{tabular}
& Dataset\\
\hline
$\textbf{hist}(D) : \mathcal{D}^T \rightarrow \mathbb{N}^{|\mathcal{U} \times \mathcal{V}|}$  & Histogram of dataset \\
\hline
\end{tabular}
\vspace{1em}
\caption{Notations (Part I).} \label{table notation 1}
\end{table}

The notion of adjacency between datasets informally refers to a pair of datasets that differ on a single record;
depending on the interpretation of the word \enquote{differ,} two definitions have been used in the literature.

\begin{defn}[Adjacency] \label{defn_adj}
Datasets $D$ and $D'$ are adjacent, denoted $adj(D,D')$, iff
\begin{itemize}
\item[\textbf{A:}] $D$ can be obtained from $D'$ by either adding or removing a single record, so
$$\|\textbf{hist}(D)-\textbf{hist}(D')\|_1 = \sum_{i=1}^{|{\mathcal{U} \times \mathcal{V}}|} |\textbf{hist}(D)_i-\textbf{hist}(D')_i| = 1,$$
\item[or] 
\item[\textbf{B:}] $D$ can be obtained from $D'$ by changing the value of a single record, so
$$|D|=|D'| \ \text{ and } \ 
\|\textbf{hist}(D)-\textbf{hist}(D')\|_1 = 2. $$
\end{itemize}
\end{defn}

Our ultimate goal is to protect the privacy of the data owners, whose data are contained in $D$.
We assume the existence of a trusted entity by the data owners, called the data holder or curator.
The data holder has direct access to the sensitive dataset and analyzes it,
ensuring that any output produced by the analysis does not violate the owners' privacy.
%
We also note that we focus on the non-interactive case, so that the analysis, i.e., the entire set of queries that is to be performed on the sensitive data, is decided and known in advance.

\subsubsection{Privacy Definition: Differential Privacy}
We are now ready to formally define differential privacy, which intuitively guarantees that a randomized algorithm accessing a sensitive dataset produces similar outputs on similar (adjacent) inputs.
As a result, the impact of any single record (individual) on the algorithm's output is negligible and, hence, no information is leaked about the individuals whose data are in the dataset.
\begin{defn}[Differential Privacy] \label{defn_dp}
A randomized algorithm $\textbf{Alg}: \mathcal{D} \rightarrow \mathcal{O}$ is $\varepsilon$-differentially private
if for all $O \subseteq \mathcal{O}$, and for all pairs of adjacent datasets $D,D'$,
$$ \Prob[\textbf{Alg}(D) \in O ] \ \leq \ e^{\varepsilon} \Prob[\textbf{Alg}(D') \in O ]$$
where the probability space is over the coin flips of $\textbf{Alg}$.
\end{defn}
The parameter $\varepsilon$, called the \emph{privacy budget}, quantifies the privacy risk;
smaller values of $\varepsilon$ imply higher privacy, as the distributions of outputs of the algorithm for adjacent inputs tend to \enquote{come closer.}
Depending on the approach we follow to defining adjacency, we end up with a (slightly) different flavor of differential privacy (Kifer and Machanavajjhala \cite{kifer2011no}).
If Definition \ref{defn_adj}(\textbf{A}) is used, we consider \emph{unbounded} differential privacy, whereas, if Definition \ref{defn_adj}(\textbf{B}) is adopted, we consider \emph{bounded} differential privacy.
Bounded differential privacy derives its name from the fact that the adjacent datasets involved have essentially the same size;
in unbounded differential privacy there is no such restriction.
In the literature, both approaches have been considered.

\subsubsection{Privacy Mechanisms}
Differential privacy is a definition, not an algorithm.
In practice, we are interested in developing algorithms that satisfy Definition \ref{defn_dp}, i.e., \emph{privacy mechanisms}, and hence offer the differential privacy guarantee to their input datasets.
As the reader may have noticed, a privacy mechanism is \emph{essentially} a \emph{randomized algorithm}, i.e.,
an algorithm that employs a degree of randomness as part of its logic and produces an output that is a random variable (or vector).
We next present two primitive differentially private mechanisms, which we use throughout our work,
namely Randomized Response and the Laplace Mechanism.
We note that both more general mechanisms (e.g., Exponential Mechanism) and task-specific mechanisms (e.g., Sparse Vector Technique, Multiplicative Weights Mechanism, Subsample and Aggregate Framework) have been developed;
we refer the interested reader to the monograph by Dwork and Roth \cite{dwork2014algorithmic} for a detailed presentation of such mechanisms.

\paragraph*{Randomized Response.}
Randomized response is a research method proposed by Warner \cite{warner1965randomized} that allows respondents to a survey on a sensitive issue to protect their privacy against the interviewer, while still providing credible answers.
Suppose that the input dataset consists of a binary record (a bit) per individual,
which indicates whether the individual does or does not have a particular property;
formally, let $u_t = t$, i.e., $\mathbf{u} = (1,...,U)$, and $\mathcal{V} = \{0,1\}$.
Assume that we wish to output $\mathbf{f}(D) = \mathbf{v} = (v_1,...,v_U)$.
Then, using Randomized Response, we instead output \cite{dwork2014algorithmic}
\begin{equation*}
\label{eqn rand resp} 
\mathbf{\tilde{f}}(D) = \mathbf{v} \odot \mathbf{b_1} + \mathbf{b_2} \odot (\mathbf{1} - \mathbf{b_1})
\end{equation*}
where $\mathbf{b_1}$ and $\mathbf{b_2}$ are $U$-dimensional random vectors of i.i.d. Bernoulli$(\frac{1}{2})$ entries and $\odot$ denotes the bitwise logical conjunction (AND) operator.

%
%
%
%
%
%
%
%

The power of randomized response is that it provides plausible deniability and directly perturbs the sensitive dataset (privacy by process).
As a result, even if an individual's record indicates that it has the property in question, the individual may still credibly argue that it does not.
Theorem \ref{thm_rand_resp} examines the privacy guarantees of Randomized Response.
\begin{thm}[Randomized response] \label{thm_rand_resp}
The version of randomized response described in Section \ref{eqn rand resp} satisfies $\log 3$-differential privacy.
\end{thm}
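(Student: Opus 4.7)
The plan is to reduce the proof to a single-coordinate calculation by exploiting the fact that the random vectors $\mathbf{b_1}$ and $\mathbf{b_2}$ have i.i.d.\ entries, so the randomized output $\tilde{\mathbf{f}}(D)$ factors as a product of independent coordinate-wise distributions. First I would compute, for a fixed coordinate $i$, the conditional distribution of the output bit $\tilde{f}_i$ given $v_i$. Conditioning on $b_{1,i}$: with probability $1/2$ the output equals $v_i$ exactly, and with probability $1/2$ it equals the fair coin $b_{2,i}$. A direct case analysis then yields $\Prob[\tilde{f}_i = v_i] = 3/4$ and $\Prob[\tilde{f}_i = 1 - v_i] = 1/4$, regardless of whether $v_i$ is $0$ or $1$.

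Next I would fix an output set $O \subseteq \{0,1\}^U$ and adjacent datasets $D, D'$. Under the bounded adjacency interpretation (Definition~\ref{defn_adj}(\textbf{B})), which is the appropriate one here since the mechanism operates on a dataset of fixed size $U$ with a single bit per user, $D$ and $D'$ agree in every coordinate except some index $j$, where one has the bit $0$ and the other has the bit $1$. By independence of the coordinates of $\tilde{\mathbf{f}}$, for every fixed output vector $\mathbf{o} \in \{0,1\}^U$ we can write
\[
\frac{\Prob[\tilde{\mathbf{f}}(D) = \mathbf{o}]}{\Prob[\tilde{\mathbf{f}}(D') = \mathbf{o}]}
\;=\; \prod_{i=1}^{U} \frac{\Prob[\tilde{f}_i = o_i \mid v_i]}{\Prob[\tilde{f}_i = o_i \mid v_i']}.
\]
Every factor with $i \neq j$ equals $1$, and the single factor at $i = j$ is at most $\tfrac{3/4}{1/4} = 3 = e^{\log 3}$ by the previous step.

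Finally, to pass from point outputs to arbitrary measurable $O$, I would sum (or integrate; here the output space is discrete so summation suffices) the pointwise bound over $\mathbf{o} \in O$:
\[
\Prob[\tilde{\mathbf{f}}(D) \in O] \;=\; \sum_{\mathbf{o} \in O} \Prob[\tilde{\mathbf{f}}(D) = \mathbf{o}] \;\leq\; 3 \sum_{\mathbf{o} \in O} \Prob[\tilde{\mathbf{f}}(D') = \mathbf{o}] \;=\; e^{\log 3}\,\Prob[\tilde{\mathbf{f}}(D') \in O],
\]
matching Definition~\ref{defn_dp} with $\varepsilon = \log 3$.

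There is no real obstacle; the argument is conceptually straightforward. The only subtlety worth flagging is that the statement implicitly relies on bounded adjacency: if one instead used the unbounded variant (Definition~\ref{defn_adj}(\textbf{A})), the output dimension $U$ would change between $D$ and $D'$ and the above factorization would need an additional comparison between distributions on spaces of different dimension. Since Section~\ref{eqn rand resp} fixes $U$ up front and outputs a $U$-bit vector, the bounded interpretation is the natural one and the proof goes through as sketched.
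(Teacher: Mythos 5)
Your proof is correct and is the standard argument: the paper does not include its own proof of Theorem~\ref{thm_rand_resp} (it is stated and attributed to the Dwork--Roth monograph), and your per-coordinate calculation showing $\Prob[\tilde f_i = v_i] = 3/4$, $\Prob[\tilde f_i = 1-v_i] = 1/4$, followed by the factorization over independent coordinates and the ratio bound $\max\{3,1/3\}=3=e^{\log 3}$, is exactly how the result is established in that reference. Your observation that the bounded adjacency interpretation (Definition~\ref{defn_adj}(\textbf{B})) is the appropriate one here is also apt, since the setup fixes $\mathbf{u}=(1,\dots,U)$ with one bit per user.
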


\paragraph*{Laplace Mechanism.} 
The Laplace Mechanism \cite{dwork2006calibrating} provides a differentially private mechanism for a numeric function $\mathbf{f}$ that has input a dataset and output, in general, a vector.
Specifically, given a function $\mathbf{f}: \mathcal{D}^T \rightarrow \mathbb{R}^N$ and an input $D \in \mathcal{D}^T$, the Laplace Mechanism outputs
$$\mathbf{\tilde{f}}(D) = \mathbf{f}(D) + (X_1, ... ,X_N) $$
where $X_i$ are i.i.d. $\text{Laplace}(0,\frac{\Delta \mathbf{f}}{\varepsilon})$ random variables and
$$\Delta \mathbf{f} = \max_{adj(D,D')} \|\mathbf{f}(D)-\mathbf{f}(D')\|_1$$
is called the $\ell_1$-sensitivity of $\mathbf{f}$.
Intuitively, the $\ell_1$-sensitivity captures the effect of a single record on the output of $\mathbf{f}$. Theorem \ref{defn_laplace_mech} examines the privacy guarantees of the Laplace Mechanism.
\begin{thm}[Laplace Mechanism] \label{defn_laplace_mech}
The Laplace Mechanism satisfies $\varepsilon$-differential privacy.
\end{thm}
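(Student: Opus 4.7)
The plan is to reduce the event-probability inequality of Definition \ref{defn_dp} to a pointwise inequality on probability densities, exploiting the fact that the noise coordinates $X_1,\ldots,X_N$ are independent Laplace variables with a scale parameter that is calibrated precisely to the $\ell_1$-sensitivity. Fix adjacent datasets $D,D'$ and any measurable $O \subseteq \mathbb{R}^N$. Writing $p_D$ for the density of $\mathbf{\tilde f}(D) = \mathbf{f}(D) + (X_1,\ldots,X_N)$ at a point $\mathbf{z} \in \mathbb{R}^N$, independence of the noise components together with Definition \ref{defn_laplace} yields
$$p_D(\mathbf{z}) \;=\; \prod_{i=1}^N \frac{\varepsilon}{2\Delta\mathbf{f}} \exp\!\left(-\frac{\varepsilon\,|z_i - \mathbf{f}(D)_i|}{\Delta\mathbf{f}}\right).$$

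First I would form the ratio $p_D(\mathbf{z})/p_{D'}(\mathbf{z})$: the Laplace normalization constants cancel exactly, leaving
$$\frac{p_D(\mathbf{z})}{p_{D'}(\mathbf{z})} \;=\; \exp\!\left(\frac{\varepsilon}{\Delta\mathbf{f}} \sum_{i=1}^N \bigl(|z_i - \mathbf{f}(D')_i| - |z_i - \mathbf{f}(D)_i|\bigr)\right).$$
The next step is to bound each summand by the reverse triangle inequality, $|z_i - \mathbf{f}(D')_i| - |z_i - \mathbf{f}(D)_i| \leq |\mathbf{f}(D)_i - \mathbf{f}(D')_i|$, which collapses the sum to $\|\mathbf{f}(D) - \mathbf{f}(D')\|_1$. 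By the very definition of $\ell_1$-sensitivity, this quantity is at most $\Delta\mathbf{f}$, so the exponent is at most $\varepsilon$ and the density ratio is at most $e^{\varepsilon}$ pointwise.

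Finally, to lift the pointwise bound to the set-level statement required by Definition \ref{defn_dp}, I would integrate over $O$: $\Prob[\mathbf{\tilde f}(D) \in O] = \int_O p_D(\mathbf{z})\,d\mathbf{z} \leq e^{\varepsilon} \int_O p_{D'}(\mathbf{z})\,d\mathbf{z} = e^{\varepsilon}\,\Prob[\mathbf{\tilde f}(D') \in O]$. There is no genuine obstacle here; the only delicate point is the alignment between the $\ell_1$ structure of the multivariate Laplace density (a product of one-dimensional exponentials in $|\cdot|$) and the $\ell_1$ notion of sensitivity in the definition of $\Delta\mathbf{f}$ — it is exactly this matching that allows the coordinatewise triangle inequality to sum cleanly into the global sensitivity bound, and it is why calibrating the noise scale to $\Delta\mathbf{f}/\varepsilon$ (rather than any other norm) is the correct choice.
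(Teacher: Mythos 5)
Your proof is correct, and it is the standard argument from Dwork et al.\ \cite{dwork2006calibrating} and the Dwork--Roth monograph \cite{dwork2014algorithmic}: form the ratio of the product Laplace densities, cancel the normalizing constants, apply the triangle inequality coordinatewise, sum to the $\ell_1$ distance, bound by $\Delta\mathbf{f}$, and integrate the pointwise density bound over $O$. Note that the paper itself states this theorem as a known result without supplying a proof, so there is nothing in the paper to compare against; your write-up fills that gap in the textbook-standard way, and your closing remark about the alignment of the $\ell_1$ density structure with $\ell_1$ sensitivity correctly identifies the reason the calibration $\Delta\mathbf{f}/\varepsilon$ is the right choice.
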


\subsection{Differential Privacy for Streaming Datasets}
In this section, we discuss how the notion of differential privacy can be adjusted to the \emph{data stream model} (\cite{muthukrishnan2005data}, \cite{garofalakis2016data}, \cite{cormode2011synopses}).

\subsubsection{The Streaming Model}
We consider the cashier-register data stream (or streaming) model with unit updates,
where the input is represented as a finite sequence $S$ of $T$ updates (i.e., records ordered by time) from a finite universe $\mathcal{S}$ and, in particular,
each update is of the form $S_t = (u_t,v_t)$ with $v_t=1$, $\forall t$.  
Therefore, the stream can be viewed as a sequence of users and the function $\textbf{hist}(S)$ maps each user to a state that accumulates the user's updates, i.e., the number of times the user appears in the stream.
We remark that our work naturally generalizes to the cashier-register model where $v_t \in \mathbb{N}_{>0}$ and, 
although it does not apply to the (most general) turnstile streaming model where $v_t \in \mathbb{Z}$,
it can be extended to the special case where $v_t \in \{-1,+1\}$
under the additional constraint that $\textbf{hist}(S^t) \in \{0,1\}^t$.
This last model is of special interest in graph applications, where edges can be inserted or deleted and, at any time step, an edge can be either present or absent.
Finally, the particular problem we consider is that of estimating the density $d(S)$ of a given input stream, that is, the fraction of $\mathcal{U}$ that actually appears in $S$;
we need to do this in a single pass, in real-time, and using small space (much smaller than $|\mathcal{U}| \text{ or } T$).
A common approach is the use of approximation algorithms that offer $(\alpha,\beta)$-approximate answers, i.e., with probability at least $1-\beta$ the computed answer is within an additive/multiplicative factor $\alpha$ of the actual answer.
The notation we just described in summarized in Table \ref{table notation 2}.
\begin{table}
\renewcommand{\arraystretch}{1.3}
\setlength{\tabcolsep}{2pt}
\centering
\begin{tabular}{|c|c|}
\hline
\textbf{Notation} & \textbf{Description}\\
\hline
\hline
$S_t = (u_t,1) \in \mathcal{S} = \mathcal{U}$ & Stream update at time $t$\\
\hline
$S^t = (S_1,...,S_t)$ & Stream after $t$ updates\\
\hline
$S=S^T \in \mathcal{S}^T = \mathcal{U}^T$ & Final input stream \\
\hline
$\textbf{hist}(S^T)_u = \sum_{i=1}^T{ \mathbf{1}(u_i=u) }$ & State of user $u \in\mathcal{U} $ \\
\hline
$d(S^T) = \frac{ \|\textbf{hist}(S^T)\|_0 }{ U }$ & Density of stream \\
\hline
\end{tabular}
\vspace{1em}
\caption{Notations (Part II).}
\label{table notation 2}
\end{table}

We next examine the notion of adjacency in the streaming model and introduce the following two approaches.
\begin{defn}[Event-level Adjacency]\label{def_adj_1-event}
Data streams $S$ and $S'$ are event-level adjacent, if they differ in a single update $S_t$ for some $t \in \{1,...,T\}$.
\end{defn}
\begin{defn}[User-level Adjacency]\label{def_adj_1}
Data streams $S$ and $S'$ are user-level adjacent, denoted as $adj(S,S')$, if they differ in all (or some) updates that correspond to a single user $u \in \mathcal{U}$.
\end{defn}
Depending on which definition is used, we get \emph{event-level privacy}, which guarantees that an adversary cannot distinguish whether update $S_t$ did or did not occur,
and \emph{user-level privacy}, which guarantees that an adversary cannot distinguish whether $u$ did or did not ever appear, independently of the actual number of appearances of $u$.
The privacy level affects the amount of perturbation used, so a much stronger guarantee like user-level privacy requires excessively more perturbation.
Kellaris et al. \cite{kellaris2014differentially} attempt to bridge the gap between event-level and user-level privacy and develop the $w$-event privacy $w$-event privacy framework, which protects all appearances of any user, occurring within a window of $w$ time steps.

As in the case of static datasets, the word \enquote{differ} in Definitions \ref{def_adj_1-event} and \ref{def_adj_1} can be interpreted in two ways.
According to the first interpretation (used, e.g., in \cite{dwork2010pan}), 
stream $S$ can be obtained from $S'$ by adding/removing an update (in Definition \ref{def_adj_1-event}) or adding/removing all (or some) updates that refer to a user (in Definition \ref{def_adj_1}), following Definition \ref{defn_adj}(\textbf{A}).
As a result, the two streams do not have the same length.
According to the second interpretation (used, e.g., in \cite{mir2011pan}),
stream $S$ can be obtained from $S'$ by changing the value of an update (in Definition \ref{def_adj_1-event}) or replacing all (or some) updates that refer to a user with updates that refer to another user (in Definition \ref{def_adj_1}), following Definition \ref{defn_adj}(\textbf{B}).
As a result, the two streams have the same length.


\subsubsection{Privacy Definition: Pan-Privacy}
Two privacy definitions have been proposed for the streaming model \cite{dwork2010differential}, namely pan-privacy \cite{dwork2010pan} and differential privacy under continual observation \cite{dwork2010continual}.
We introduce them through the following example.

\begin{example}
Consider an algorithm that takes as input a data stream $S=(S_1,...,S_T)$ and, upon arrival of each element $S_i$, computes some estimate $f_i = f(S_1,...,S_i)$.
If we were in the static model, so that any computation would be performed after the entire stream was processed, 
then, according to Definition \ref{defn_dp},
differential privacy would ensure that privacy is preserved against an adversary that, at time $T$, observed the output $f_T$.
\emph{Differential privacy under continual observation} ensures that privacy is preserved against an adversary that observes the entire sequence of outputs $f_1,...,f_T$.
\emph{Pan-privacy}, which we formally define in Definition \ref{def_pan_privacy}, ensures that privacy is preserved against an adversary that, 
at time $j \in \{1,...,T\}$, observes the internal state of the algorithm (and hence can compute the corresponding output $f_j$)
and, at time $T$, observes the final output $f_T$.
In other words, pan-privacy enforces an additional differential privacy constraint to the internal state of the algorithm.
The two definitions can be combined, by ensuring that all the internal state, the entire output sequence, and their joint distribution satisfy differential privacy. 
\end{example}

We proceed by formalizing the notion of pan-privacy, which aims to protect against an adversary that can, on rare occasions, observe the internal state of the algorithm.
This is, of course, in addition to the standard -for differential privacy- assumptions about the adversary having arbitrary control over the input, arbitrary prior knowledge, and arbitrary computational power.

\begin{defn}[Pan-privacy]\label{def_pan_privacy}
Let $\textbf{Alg}$ be a randomized algorithm.
Let $\mathcal{I}$ denote its set of internal states and $\mathcal{O}$ denote its set of possible outputs.
Then $\textbf{Alg}$, mapping data streams of finite length $T$ to the range $\mathcal{I} \times \mathcal{O}$, is $\varepsilon$-pan-private against a single intrusion,
if, for all sets $I \subseteq \mathcal{I}$ and $O \subseteq \mathcal{O}$ and all pairs of adjacent data streams $S,S'$,
$$ \Prob[\textbf{Alg}(S) \in (I,O) ] \ \leq \ e^{\varepsilon} \Prob[\textbf{Alg}(S') \in (I,O) ]$$
where the probability space is over the coin flips of $\textbf{Alg}$.
\end{defn}
Definition \ref{def_pan_privacy} considers only a single intrusion. 
To handle multiple intrusions, we must consider interleavings of observations of internal states and output sequences.
We also have to differentiate between announced and unannounced intrusions.
In the former case (e.g., subpoena), the algorithm knows that an intrusion occurred, so it can re-randomize its state and handle multiple announced intrusions.
In the latter case (e.g., hacking), the algorithm can only tolerate a single unannounced intrusion and strong negative results have been proved for even two unannounced intrusions.

We remark that ordinary streaming algorithms based on sampling and sketching techniques do not provide the pan-privacy guarantee.
Sampling techniques maintain information about a subset of the users, so an intruder with access to the sample (the internal state of the algorithm) would violate the privacy of the sampled users.
Sketching techniques which are based on hashing, like the FM Sketch \cite{flajolet1985probabilistic} and the Count-Min Sketch \cite{cormode2005improved}, also cannot protect the privacy of the users against an adversary who has access to the hash functions used; the hash functions are part of the internal state.

\subsubsection{Related Work: Pan-Private Algorithms}
So far, two works have addressed the challenge of developing pan-private streaming algorithms.
They both examine variants of the same problems, applying different techniques.
Dwork et al. \cite{dwork2010pan} work in the cashier-register streaming model and develop algorithms based on sampling and randomized response.
Mir et al. \cite{mir2011pan} work in the turnstile streaming model and rely on both existing and novel sketches; they also develop a general noise-calibrating technique for sketches. 
The problems examined are the following:
\begin{itemize}
\item[-] Density estimation and distinct count: the fraction of $\mathcal{U}$ that appears and the number of users with nonzero state, respectively.

\item[-] $t$-cropped mean and $t$-cropped first moment: the average and sum, respectively, over all users, of the minimum of $t$ and the number of appearances of the user.

\item[-] Fraction of $k$-heavy hitters and $k$-heavy hitters count: the fraction and number, respectively, of users that appear at least $k$ times.

\item[-] $t$-incidence estimation: the fraction of users that appear exactly $t$ times.
\end{itemize}

\emph{In our work, we aim to offer the user-level pan-privacy guarantee (against a single intrusion).}
Our objective is to estimate the density of the given input stream
and our algorithms are based on the density estimator of Dwork et al. \cite{dwork2010pan}.
The algorithm proposed by Mir et al. \cite{mir2011pan} for the (similar) distinct count problem relies on the $\ell_0$ Sketch, which is due to Cormode et al. \cite{cormode2002comparing} and utilizes a family of distributions called $p$-stable (Indyk \cite{indyk2000stable}).
Despite being particularly interesting theoretically, the pan-private algorithm of Mir et al. \cite{mir2011pan} is extremely impractical as it involves an application of the exponential mechanism, which requires sampling twice from a space of $2^{b_S}$ sketches (where $b_S$ is the bit size of the sketch).
To make matters worse, in order to evaluate the exponential mechanism's scoring function, an $\ell_0$ norm minimization problem is solved for every single sketch, which translates to solving $2^{b_S}$ problems.

Finally, although our focus is on algorithms that produce a single output, we remark that Dwork \cite{dwork2010differential} shows how to modify the pan-private density estimator developed by Dwork et al. \cite{dwork2010pan} (which, we repeat, is the baseline for our work) to produce output continually.
The resulting continual observation density estimator guarantees \emph{user-level pan-privacy under continual observation}.
A similar technique can be applied to all the algorithms we develop to allow them to produce continual output while still preserving privacy.

\section{Conventional Pan-Private Density Estimator}
Dwork et al. \cite{dwork2010pan} proposed the first user-level pan-private algorithm for the density estimation problem (Algorithm \ref{DworkDE}). In this section, we provide a detailed presentation and analysis of that algorithm, which we call \verb|Dwork|.
Algorithm \ref{DworkDE} is randomized;
it takes as input the data stream $S$ whose density we wish to estimate, as well as the desired privacy budget and accuracy parameters.
Although the length $T$ of the stream is assumed known in advance,
as mentioned in the previous section, this assumption does not affect the analysis of the algorithm and can easily be dropped, so that the algorithm outputs the stream density when it receives a special signal.

To keep its state small, the algorithm maintains information only about a subset of the users, which is selected \emph{randomly} at the beginning.
The state of the algorithm is a bitarray with one \emph{random} bit per sampled user, drawn as described below.
If a user has not appeared in $S$, its entry is drawn from a uniform Bernoulli distribution, otherwise it is drawn from a slightly biased (towards $1$) Bernoulli distribution, no matter how many times the user has appeared.
The two distributions should be close enough to guarantee that the state satisfies differential privacy, but far enough to allow collection of aggregate statistics about the fraction of users that appear at least once.
When the algorithm outputs the final density estimate, it uses the \emph{random} Laplace Mechanism, which guarantees that the output also satisfies differential privacy.
Hence, the algorithm uses three degrees of randomness; namely, random selection of the subset of users, random generation of the bitarray, and random variation of the final density estimate.

\begin{algorithm}

\caption{\texttt{Dwork}\label{DworkDE}}

\DontPrintSemicolon

\KwIn{Data stream $S$, Privacy budget $\varepsilon$, Accuracy parameters $(\alpha,\beta)$}

\KwOut{Density $\tilde{d}(S)$}

Pick $m = \mathcal{O}(\frac{1}{\varepsilon^2 \alpha^2}\log{\frac{1}{\beta}})$
\ \ \textit{(or compute $m^*$ as described in subsection \ref{subSEC_sample_size} and set $m=m^*$)}\;

Sample a random subset $\mathcal{M} \subseteq \mathcal{U}$ of $m$ users (without replacement) and define an arbitrary ordering over $\mathcal{M}$\;

Create a bitarray $\mathbf{b}=[b_1\ ...\ b_{m}]$ and map $\mathcal{M}[i] \rightarrow b_i,\ \forall \ i \in \{1,...,m\}$\;

Initialize $\mathbf{b}$ randomly: $b_i \sim \text{Bernoulli}(\frac{1}{2}), \ \forall \ i \in \{1,...,m\}$\;

\For{t = 1 \KwTo T}{

	\If{$S_t \in \mathcal{M}$}{
		
		Find $i:\ \mathcal{M}[i]=S_t$\;
		
		Re-sample: $b_i \sim \text{Bernoulli}(\frac{1}{2}+\frac{\varepsilon}{4})$\;
			
	}
}

Return $\tilde{d}(S)\ = \ \frac{4}{\varepsilon}(\frac{1}{m}\sum_{i=1}^{m}{b_i}-\frac{1}{2}) \ + \ \text{Laplace}(0,\frac{1}{\varepsilon m})$\;

\end{algorithm}

We now make two important remarks concerning potential extensions of Algorithm \ref{DworkDE}.
The techniques described in the remarks apply (slightly modified) to all the algorithms we present, so we do not revisit them in our work.
\begin{itemize}
\item[-] Algorithm \ref{DworkDE} can tolerate a single (announced or unannounced) intrusion. Dwork et al. \cite{dwork2010pan} show how to handle multiple announced intrusions, by re-randomizing the state after each intrusion has occurred.
\item[-] Algorithm \ref{DworkDE} works in the cashier-register streaming model, that is, once a user appears in the stream, it cannot be deleted.
However, as we mentioned earlier, our work also applies to the case where a user $u$ may be both inserted and deleted (later on) from the stream.
In particular, if an update of the form $(u,insert)$ arrives, $u$'s bit is drawn from $\text{Bernoulli}(\frac{1}{2}+\frac{\varepsilon}{4})$, whereas if an update of the form $(u,delete)$ arrives, $u$'s bit is re-drawn from $\text{Bernoulli}(\frac{1}{2})$.
This allows Algorithm \ref{DworkDE} to perform pan-private graph density estimation as well.
\end{itemize}

\subsection{Analysis}
We first examine the privacy guarantees of Algorithm \ref{DworkDE}. We have the following theorem. For simplicity in presentation, we defer all proofs to the Appendix.

\begin{thm}\label{thm_DworkDE_priv}
If $\varepsilon \leq \frac{1}{2}$, then Algorithm \ref{DworkDE} satisfies $2\varepsilon$-pan-privacy.
\end{thm}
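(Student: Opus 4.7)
The plan is to bound the joint probability ratio on (intrusion state, final output) by $e^{2\varepsilon}$ via a decomposition into two $\varepsilon$-DP components: the randomized state and the Laplace-perturbed output.

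Since the random subset $\mathcal{M}$ is selected independently of the stream, I would first condition on $\mathcal{M}$. If the distinguishing user $u$ between the user-level adjacent streams $S$ and $S'$ is not in $\mathcal{M}$, the algorithm's execution is statistically identical under both streams (all divergent updates are simply ignored), so the joint distribution is unchanged and the ratio equals $1$. I would thus reduce to the case $u \in \mathcal{M}$ at some index $i^*$; every bit $b_i$ with $i \neq i^*$ then has identical distribution under $S$ and $S'$, so only the single bit $b_{i^*}$ can carry any distributional difference.

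For the state component, I would use the fact that $b_{i^*}^{(j)}$ is distributed as $\text{Bernoulli}(\tfrac12 + \tfrac{\varepsilon}{4})$ if $u$ has appeared in $S^j$ and $\text{Bernoulli}(\tfrac12)$ otherwise (recall that the bit is re-sampled at each appearance, so its law at time $j$ depends only on whether $u$ has appeared at least once). The pointwise ratio between any two such laws is at most $\tfrac{1/2 + \varepsilon/4}{1/2} = 1 + \tfrac{\varepsilon}{2} \le e^{\varepsilon}$ for $\varepsilon \le 1$, yielding $\varepsilon$-differential privacy for the full intrusion state after multiplying the identical factors coming from the other bits.

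For the output component, I would invoke the Laplace Mechanism (Theorem \ref{defn_laplace_mech}) applied to the underlying linear statistic of $\mathbf{b}^{(T)}$, treating the chosen noise scale $\tfrac{1}{\varepsilon m}$ as calibrated against the sensitivity of that statistic to a single bit flip, so that the noise buys an additional $\varepsilon$ of privacy for the reported density. Sequential composition of the two $\varepsilon$-DP bounds then delivers $2\varepsilon$-pan-privacy on the pair $(B_j,\tilde{d})$.

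The main obstacle is that the state snapshot and the output share their underlying randomness in the bit trajectory, so standard independent-mechanism composition does not apply off the shelf. I would therefore have to work with the joint density factorization $p_S(\mathbf{b},y) = p_S(\mathbf{b})\,p_S(y \mid \mathbf{b})$ and verify pointwise that the conditional factor $p_S(y \mid \mathbf{b})$ stays within the $e^{\varepsilon}$ budget against its counterpart under $S'$—this is where the Laplace noise must absorb the discrepancy between the post-intrusion laws of $b_{i^*}^{(T)}$ under $S$ versus $S'$ (independent re-sample under $S$ if $u$ appears after time $j$ versus frozen equal to $b_{i^*}^{(j)}$ under $S'$), and where the hypothesis $\varepsilon \le \tfrac12$ is used to keep the Bernoulli ratio $1 + \tfrac{\varepsilon}{2}$ tight against $e^{\varepsilon}$.
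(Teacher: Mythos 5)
You follow the paper's decomposition: factor the joint law of the intrusion state and the reported density as $p_S(\mathbf{b})\,p_S(y\mid\mathbf{b})$, bound each factor by $e^{\varepsilon}$, and compose. The reduction to the single distinguishing bit $b_{i^*}$ after conditioning on whether $u\in\mathcal{M}$ matches the paper, as does the appeal to the Laplace mechanism for the output factor.

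Two points do not hold up. First, the claim that the pointwise ratio is at most $\frac{1/2+\varepsilon/4}{1/2}=1+\frac{\varepsilon}{2}$ is not the binding bound, and $1+\frac{\varepsilon}{2}\le e^{\varepsilon}$ holds for every $\varepsilon\ge 0$, so under your reasoning the hypothesis $\varepsilon\le\frac{1}{2}$ would never be used. Differential privacy requires bounding all four ratios $p_S(b)/p_{S'}(b)$ and $p_{S'}(b)/p_S(b)$ at $b\in\{0,1\}$; the largest is $\frac{1/2}{1/2-\varepsilon/4}=\frac{1}{1-\varepsilon/2}$, and the inequality $e^{-\varepsilon}\le 1-\frac{\varepsilon}{2}$ (equivalently $\frac{1}{1-\varepsilon/2}\le e^{\varepsilon}$) is what the hypothesis $\varepsilon\le\frac{1}{2}$ is actually for. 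The paper checks both $1+\frac{\varepsilon}{2}$ and $1-\frac{\varepsilon}{2}$ against $[e^{-\varepsilon},e^{\varepsilon}]$; you cite only the non-binding one and attribute the hypothesis to it. Second, your last paragraph correctly identifies that the intrusion snapshot and the final output share the bit trajectory---so a naive composition of two independent mechanisms is not available---and correctly locates the discrepancy the Laplace noise must absorb (post-intrusion re-sampling of $b_{i^*}$ under $S$ versus a frozen bit under $S'$). Identifying this is the right move, and is in fact more careful than the paper's own terse assertion, but you stop at stating the task; the bound $p_S(y\mid\mathbf{b})\le e^{\varepsilon}\,p_{S'}(y\mid\mathbf{b})$ is never actually established, so as written this step remains an acknowledged gap rather than a completed argument.
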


\begin{toappendix}
\begin{proof}[Proof of Theorem~\ref{thm_DworkDE_priv}]
Let $S,S'$ be two adjacent streams that differ on all occurrences of user $u \in \mathcal{U}$. Assume w.l.o.g. that $u \in S$ and $u \not\in S'$.

The \emph{state} of Algorithm \ref{DworkDE} satisfies $\varepsilon$-differential privacy. All the information that Algorithm \ref{DworkDE} stores as its state is the bitarray $\mathbf{b}$. We distinguish the following two cases.\\
i) $u \not\in \mathcal{M}$: Perfect privacy is guaranteed, as no information is stored on user $u$.\\
ii) $u \in \mathcal{M}$: W.l.o.g., let $b_1$ be the entry that corresponds to $u$ in the bitarray. Then, assuming that $u$ has already appeared in the stream when the adversary views $\mathbf{b}$, $b_1(S)$ is drawn from $\text{Bernoulli}(\frac{1}{2}+\frac{\varepsilon}{4})$ and $b_1(S')$ is drawn from $\text{Bernoulli}(\frac{1}{2})$. We thus have to bound the following probability ratios according to the differential privacy definition.
\begin{equation*}
\begin{split}
\frac{\Prob\left(\mathbf{b(S)}=[1 \ \underline{b_2} \ ...\ \underline{b_{m}}]\right)}{\Prob\left(\mathbf{b(S')}=[1 \ \underline{b_2} \ ...\ \underline{b_{m}}]\right)}
&=
\frac{\Prob\left(b_1(S)=1\right)}{\Prob\left(b_1(S')=1\right)} \\ 
&= 
\frac{\frac{1}{2}+\frac{\varepsilon}{4}}{\frac{1}{2}}
\ = \ 1+\frac{\varepsilon}{2} , \\
\frac{\Prob\left(\mathbf{b(S)}=[0 \ \underline{b_2} \ ...\ \underline{b_{m}}]\right)}{\Prob\left(\mathbf{b(S')}=[0 \ \underline{b_2} \ ...\ \underline{b_{m}}]\right)} 
&= 
\frac{\Prob\left(b_1(S)=0\right)}{\Prob\left(b_1(S')=0\right)} \\
&=
\frac{\frac{1}{2}-\frac{\varepsilon}{4}}{\frac{1}{2}} 
\ = 
\ 1-\frac{\varepsilon}{2} .
\end{split}
\end{equation*}

Since $ e^{-\varepsilon} \leq 1+\frac{\varepsilon}{2} \leq \sum_{k=0}^{\infty}{{\varepsilon^k}{k!}} = e^{\varepsilon} $ and  $ e^{-\varepsilon} \leq 1-\frac{\varepsilon}{2} \leq e^{\varepsilon} , \ \forall \ \varepsilon \in [0,\frac{1}{2}]$, it is implied that user $u$ is guaranteed $\varepsilon$-differential privacy against an adversary that observes $u$'s entry in $\mathbf{b}$.

The \emph{output} of Algorithm \ref{DworkDE} (conditioned on the state) also satisfies $\varepsilon$-differential privacy, as it is computed by independently applying the Laplace mechanism.
Specifically, since the sensitivity of $\tilde{d}$ is
$$\Delta \tilde{d} = \max_{\{S,S'\}:\ adj(S,S')}{|\tilde{d}(S) - \tilde{d}(S')|} = \frac{1}{m} ,$$
we add noise $\sim \text{Laplace}(0,\frac{1}{\varepsilon m})$.

The \emph{overall} Algorithm \ref{DworkDE} satisfies $2\varepsilon$-pan-privacy. Specifically, for all possible states (bitarrays) $\underline{\mathbf{b}}$ and outputs (estimated densities) $\underline{\tilde{d}}$, it holds that
\begin{equation*}
\begin{split}
\Prob\left(\mathbf{b(S)}=\underline{\mathbf{b}},\ \tilde{d}(S)=\underline{\tilde{d}} \right)
=&
\Prob\left(\mathbf{b(S)}=\underline{\mathbf{b}}\right)\\
& \Prob\left(\tilde{d}(S)=\underline{\tilde{d}} \ | \ \mathbf{b(S)}=\underline{\mathbf{b}} \right) \\
\leq & 
e^{\varepsilon} \Prob\left(\mathbf{b(S')}=\underline{\mathbf{b}}\right)
e^{\varepsilon} \\
& \Prob\left(\tilde{d}(S')=\underline{\tilde{d}} \ | \ \mathbf{b(S')}=\underline{\mathbf{b}} \right)\\
=&
e^{2\varepsilon}
\Prob \left( \mathbf{b(S')}=\underline{\mathbf{b}},\ \tilde{d}(S')=\underline{\tilde{d}} \right),
\end{split}
\end{equation*}

so the definition of pan-privacy is satisfied.
\end{proof}
\end{toappendix}

We next present two theorems on the accuracy of Algorithm \ref{DworkDE}.
The first theorem quantifies the bias and mean squared error of the estimator.
We note that Dwork et al. \cite{dwork2010pan} demonstrate that the estimator is unbiased, but do not discuss its mean squared error.

Let $S_{\mathcal{M}}$ be the subsequence (sub-stream) of the original stream $S$ that consists only of updates that refer to users in $\mathcal{M}$. In particular, $S_{\mathcal{M}}$ is constructed as $S_{\mathcal{M}} = \cup_{i \in \mathcal{M}} \{ s_i \} \subseteq S$.
Then, the following theorem quantifies the bias and mean squared error of the estimator $\tilde{d}$ of the density of $S_{\mathcal{M}}$. 
\begin{thm}\label{thm_DworkDE_acc1}
For a fixed sample $\mathcal{M}$, Algorithm \ref{DworkDE} provides an unbiased estimate $\tilde{d}$ of the density of $S_{\mathcal{M}}$ and has mean squared error
$$\E [ \ (\tilde{d}-d(S_{\mathcal{M}}))^2 \ ] \ \leq \ \frac{2(2m+1)}{m^2\varepsilon^2}.$$
\end{thm}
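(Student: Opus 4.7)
The plan is to exploit the fact that, conditional on the fixed sample $\mathcal{M}$, the estimator $\tilde{d}$ decomposes as an affine function of independent Bernoulli random variables plus an independent Laplace noise term, so both the expectation and the variance can be computed coordinate by coordinate.

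First I would set up notation. Let $k = |\{i \in \mathcal{M} : i \text{ appears in } S\}|$ so that $d(S_{\mathcal{M}}) = k/m$. From the algorithm, for $i$ such that $\mathcal{M}[i]$ appears in the stream, the final value of $b_i$ is distributed as $\text{Bernoulli}(\tfrac{1}{2}+\tfrac{\varepsilon}{4})$ (the last re-sampling dominates), while for $i$ such that $\mathcal{M}[i]$ does not appear, $b_i \sim \text{Bernoulli}(\tfrac{1}{2})$. All $b_i$ are mutually independent and independent of the added Laplace noise $L \sim \text{Laplace}(0, \tfrac{1}{\varepsilon m})$.

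For unbiasedness, I would compute
\begin{equation*}
\E\Bigl[\tfrac{1}{m}\sum_{i=1}^m b_i\Bigr] = \tfrac{1}{m}\Bigl(k\bigl(\tfrac{1}{2}+\tfrac{\varepsilon}{4}\bigr) + (m-k)\cdot\tfrac{1}{2}\Bigr) = \tfrac{1}{2} + \tfrac{\varepsilon}{4}\cdot\tfrac{k}{m},
\end{equation*}
so that $\E[\tilde{d}] = \tfrac{4}{\varepsilon}\bigl(\tfrac{\varepsilon}{4}\cdot\tfrac{k}{m}\bigr) + \E[L] = k/m = d(S_{\mathcal{M}})$. For the mean squared error, since $\tilde{d}$ is unbiased, MSE equals variance, and by independence
\begin{equation*}
\mathrm{Var}(\tilde{d}) = \frac{16}{\varepsilon^2 m^2}\sum_{i=1}^m \mathrm{Var}(b_i) + \mathrm{Var}(L).
\end{equation*}
Using the elementary bound $\mathrm{Var}(\text{Bernoulli}(p)) = p(1-p) \leq \tfrac{1}{4}$ for every $i$, the first term is at most $\tfrac{16}{\varepsilon^2 m^2}\cdot\tfrac{m}{4} = \tfrac{4}{\varepsilon^2 m}$. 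For the second, I use $\mathrm{Var}(\text{Laplace}(0,b)) = 2b^2$ (from Definition~\ref{defn_laplace}) with $b=\tfrac{1}{\varepsilon m}$ to get $\tfrac{2}{\varepsilon^2 m^2}$. Summing yields $\tfrac{4}{\varepsilon^2 m} + \tfrac{2}{\varepsilon^2 m^2} = \tfrac{2(2m+1)}{\varepsilon^2 m^2}$, which matches the claim.

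The argument is essentially routine once the right decomposition is in place, so no step is a genuine obstacle. The only subtlety worth stating carefully is that $\tilde{d}$ estimates the density of $S_{\mathcal{M}}$ (not of the full stream $S$) because, for fixed $\mathcal{M}$, no information is recorded about users outside $\mathcal{M}$; hence randomness over the choice of $\mathcal{M}$ is explicitly excluded in this theorem and will be handled separately when combining with sampling error.
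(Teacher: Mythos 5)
Your proof is correct, and it actually takes a cleaner route than the paper's. Both of you compute the bias the same way (getting $\E[\hat{d}]=\tfrac{1}{2}+\tfrac{\varepsilon}{4}d(S_{\mathcal{M}})$) and both arrive at the same final bound, but the mechanism for the variance step differs. The paper treats the event $\{\mathcal{M}[i]\in S_{\mathcal{M}}\}$ as itself random with probability $d(S_{\mathcal{M}})$ (explicitly noting this interpretation requires ``all users are considered equally likely to appear'') and invokes the conditional-variance decomposition, obtaining the exact expression $\mathrm{var}(b_i)=\tfrac{1}{4}-\tfrac{\varepsilon^2 d(S_{\mathcal{M}})^2}{16}$ before bounding by $\tfrac14$. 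You instead take ``fixed sample'' at face value: with $\mathcal{M}$ and $S$ both fixed, the partition of indices into ``appeared'' and ``did not appear'' is deterministic (your $k$), so the $b_i$ are independent Bernoullis with known parameters and the elementary bound $p(1-p)\le\tfrac14$ suffices with no total-variance machinery. Your approach is more faithful to the theorem's hypothesis and avoids the somewhat artificial probability model the paper introduces; the price is that you discard the small negative correction $-\tfrac{\varepsilon^2 d}{16m}$ (the paper keeps an analogous $-\tfrac{\varepsilon^2 d^2}{16m}$ term) before bounding, but since both papers throw it away in the final inequality anyway, nothing is lost. One small presentational note: you correctly observe that repeated re-sampling of $b_i$ from the same $\text{Bernoulli}(\tfrac12+\tfrac{\varepsilon}{4})$ leaves the final marginal unchanged, a point the paper glosses over.
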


\begin{toappendix}
\begin{proof}[Proof of Theorem~\ref{thm_DworkDE_acc1}]
We begin with the \emph{bias} computation. We examine the distribution of an arbitrary entry in $\textbf{b}$:
\[ 
b_i \sim \left\{
\begin{array}{ll}
      \text{Bernoulli}(\frac{1}{2}) , & \mathcal{M}[i] \not\in S_{\mathcal{M}} ,\\
      \text{Bernoulli}(\frac{1}{2}+\frac{\varepsilon}{4}) , & \mathcal{M}[i] \in S_{\mathcal{M}} . \\
\end{array} 
\right. 
\]
Note that the distribution of $b_i$ does not depend on the number of appearances of $\mathcal{M}[i]$;
it only depends on whether it appeared or not.

Now, let $\hat{d}=\frac{1}{m}\sum_{i=1}^{m}{b_i}$. Then,
\begin{equation*}
\begin{split}
\E[\hat{d}] 
=&
\frac{1}{m}\sum_{i=1}^{m}{\E[b_i]} \\
=&
\frac{1}{m}\sum_{i=1}^{m} \E[b_i|\mathcal{M}[i] \in S_{\mathcal{M}}] \Prob(\mathcal{M}[i] \in S_{\mathcal{M}}) \\
&+ \E[b_i|\mathcal{M}[i] \not\in S_{\mathcal{M}}] \Prob(\mathcal{M}[i] \not\in S_{\mathcal{M}}) \\
=&
\frac{1}{m}\sum_{i=1}^{m}{ (\frac{1}{2}+\frac{\varepsilon}{4})d(S_{\mathcal{M}}) \ + \ \frac{1}{2}(1-d(S_{\mathcal{M}})) } \\
=&
\frac{1}{2} + \frac{\varepsilon}{4}d(S_{\mathcal{M}})
\end{split}
\end{equation*}
where we interpret the probability that a user is present in the sub-stream as the density of sub-stream.
This is true, if all users are considered equally likely to appear.

The final estimate (output) $\tilde{d}$ is then computed as  $\tilde{d} = \frac{4}{\varepsilon}(\hat{d}-\frac{1}{2}) + \text{Laplace}(0,\frac{1}{\varepsilon m})$, which gives that
$$ \E[\tilde{d}] \ = \ \frac{4}{\varepsilon}(\E[\hat{d}]-\frac{1}{2}) + \E[\text{Laplace}(0,\frac{1}{\varepsilon m})] \ = \ d(S_{\mathcal{M}}) ,$$
so $\tilde{d}$ is indeed an unbiased estimate.\\

We proceed with the \emph{mean squared error}. Given that $\tilde{d}$ is an unbiased estimate of $d(S_{\mathcal{M}})$, its mean squared error coincides with its variance.

A special case of the law of total variance states that, for any event $A$ and any random variable $b$, 
\begin{equation*}
\begin{split}
\text{var}(b) =& 
\text{var}(b|A) \Prob(A) 
+ \text{var}(b|A^C) \Prob(A^C) \\
&+ ( \E[b|A] - \E[b|A^C] )^2 \Prob(A) \Prob(A^C).
\end{split}
\end{equation*}
This property is used in conjunction with the fact that, if $b \sim \text{Bernoulli}(p)$, then $\text{var}(b)=p(1-p)$.
In our setting, we have $m$ Bernoulli random variables $b_i \ (i=1,...,m)$ and, for each user $i$, $A$ is the event that $\mathcal{M}[i] \in S_{\mathcal{M}}$ and $B$ is the event that $\mathcal{M}[i] \not\in S_{\mathcal{M}}$. Therefore, $b_i|A \sim \text{Bernoulli}(\frac{1}{2}+\frac{\varepsilon}{4})$ and $b_i|B \sim \text{Bernoulli}(\frac{1}{2})$. Then,
\begin{equation*}
\begin{split}
\text{var}(b_i) 
& = \ \frac{1}{4} - \frac{\varepsilon^2 d(S_{\mathcal{M}})^2}{16} \\
\Rightarrow
\text{var}(\hat{d}) 
& = \ \frac{1}{m^2}\sum_{i=1}^{m}{ \text{var}(b_i)} = \frac{1}{4m} - \frac{\varepsilon^2 d(S_{\mathcal{M}})^2}{16m} \\
\Rightarrow 
\text{var}(\tilde{d}) 
& = \ (\frac{4}{\varepsilon})^2 \text{var}(\hat{d}) + \text{var}(\text{Laplace}(0,\frac{1}{\varepsilon m}))\\
&= \frac{4}{m \varepsilon^2} - \frac{d(S_{\mathcal{M}})^2}{m} + \frac{2}{m^2 \varepsilon^2}\\
& \leq \frac{2(2m+1)}{m^2\varepsilon^2}
\end{split}
\end{equation*}
which completes the proof. To derive the last inequality we used the fact that $0\leq d(S_{\mathcal{M}}) \leq 1$.
\end{proof}
\end{toappendix}

The next theorem validates that the estimator provides the desired $(\alpha,\beta)$-approximation of the actual stream density. In contrast to Dwork et al. \cite{dwork2010pan}, we parameterize the proof, so we are then able to numerically compute the tightest version of the bound we derive.
\begin{thm}\label{thm_DworkDE_acc2}
If the sample maintained by Algorithm \ref{DworkDE} consists of $m = \mathcal{O}(\frac{1}{\varepsilon^2 \alpha^2}\log{\frac{1}{\beta}})$ users from $\mathcal{U}$, then, for fixed input $S$,
$\Prob\left(| \tilde{d}-d(S) | \ \geq \ \alpha \right)\ \leq \ \beta$
where the probability space is over the random choices of the algorithm.
\end{thm}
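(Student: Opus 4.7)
The plan is to decompose the total error via the triangle inequality:
\[
|\tilde{d}-d(S)| \ \leq \ |\tilde{d}-d(S_{\mathcal{M}})| \ + \ |d(S_{\mathcal{M}})-d(S)|,
\]
and bound each summand by $\alpha/2$ with probability at least $1-\beta/2$. A union bound then yields the claimed $(\alpha,\beta)$-guarantee. This splits the randomness naturally into (i) the random choice of the sample $\mathcal{M}$, which only affects the second term, and (ii) the Bernoulli coin flips plus the Laplace noise, which only affect the first term.

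For the \emph{sampling error}, I would write $d(S_{\mathcal{M}}) = \frac{1}{m}\sum_{i=1}^{m} \mathbf{1}(\mathcal{M}[i]\in S)$, so that $d(S_{\mathcal{M}})$ is the mean of $m$ indicator variables drawn uniformly without replacement from the population of size $U$. Their expectation is exactly $d(S)$. Since these indicators are bounded in $[0,1]$, Hoeffding's inequality (in its without-replacement version, which only improves on the i.i.d. bound) gives
\[
\Prob\!\left(|d(S_{\mathcal{M}})-d(S)| \geq \tfrac{\alpha}{2}\right) \ \leq \ 2\exp\!\left(-\tfrac{m\alpha^2}{2}\right),
\]
which is at most $\beta/2$ for $m = \Omega(\alpha^{-2}\log(1/\beta))$.

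For the \emph{estimation error}, condition on $\mathcal{M}$. By Theorem~\ref{thm_DworkDE_acc1}, $\tilde{d}=\frac{4}{\varepsilon}(\hat{d}-\frac{1}{2})+L$ with $L\sim\text{Laplace}(0,\tfrac{1}{\varepsilon m})$ and $\hat{d}=\frac{1}{m}\sum_i b_i$ a sum of independent Bernoullis with $\E[\hat{d}]=\tfrac{1}{2}+\tfrac{\varepsilon}{4}d(S_{\mathcal{M}})$. I would split $\alpha/2$ further, say as $\alpha/4+\alpha/4$, and handle the two noise sources separately. Applying Hoeffding to $\hat{d}$ and rescaling by $4/\varepsilon$ gives
\[
\Prob\!\left(\left|\tfrac{4}{\varepsilon}(\hat{d}-\tfrac{1}{2})-d(S_{\mathcal{M}})\right| \geq \tfrac{\alpha}{4}\right) \ \leq \ 2\exp\!\left(-\tfrac{m\varepsilon^2\alpha^2}{32}\right),
\]
and the standard Laplace tail bound gives $\Prob(|L|\geq \alpha/4)=\exp(-\alpha\varepsilon m/4)$. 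Both are at most $\beta/4$ once $m = \Omega(\varepsilon^{-2}\alpha^{-2}\log(1/\beta))$, which dominates the requirement from the sampling-error analysis.

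The main technical obstacle is the careful bookkeeping: allocating the $\alpha$- and $\beta$-budgets across the three independent sources of randomness (sample choice, bitarray coins, Laplace perturbation) so that each concentration bound contributes only an $\mathcal{O}(\varepsilon^{-2}\alpha^{-2}\log(1/\beta))$ requirement on $m$, and verifying that Hoeffding applies cleanly in the without-replacement case for the sampling step. Once the budget split is fixed, the three tail bounds combine through a straightforward union bound to give the stated result.
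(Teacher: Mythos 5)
Your proof is correct and follows essentially the same route as the paper's: decompose $|\tilde{d}-d(S)|$ into a sampling term and an estimation term, split the estimation term further into a Bernoulli‐coin term and a Laplace term, apply Hoeffding/Chernoff and the Laplace tail bound to each, and union‐bound. You also correctly handle the subtlety that $\E[\hat d]$ is random by conditioning on $\mathcal{M}$, which is the same idea as the paper's conditioning on $d(S_{\mathcal{M}})$ via total probability. The one substantive difference is that you fix the budget splits at $\tfrac12,\tfrac14,\tfrac14$ whereas the paper deliberately keeps them as free parameters $\delta_1,\dots,\delta_4$; the paper exploits this later to numerically minimize the required $m$, so your version loses that tightness but yields the same $\mathcal{O}\!\left(\tfrac{1}{\varepsilon^2\alpha^2}\log\tfrac1\beta\right)$ asymptotics. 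Also, a small constant slip: with threshold $\alpha/4$ on $\tfrac4\varepsilon(\hat d-\tfrac12)-d(S_{\mathcal M})$, Hoeffding on $\hat d$ at level $\varepsilon\alpha/16$ gives $2\exp(-m\varepsilon^2\alpha^2/128)$, not $2\exp(-m\varepsilon^2\alpha^2/32)$; this does not affect the stated conclusion.
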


\begin{proofsketch}
Let $\hat{d}=\frac{1}{m}\sum_{i=1}^{m}{b_i}$. For some $\alpha>0$ and $\delta_1, \delta_2 \in (0,1)$,
\begin{equation*}
\begin{split}
\Prob \left(|  \right. \left. \tilde{d} - d(S) | \ \geq \ \alpha \right)
\leq&
\Prob\left(|\frac{\varepsilon}{4}\text{Laplace}(0,\frac{1}{\varepsilon m}) | \ \geq \ \frac{\varepsilon}{4}\delta_1 \delta_2 \alpha \right) \\
&+
\Prob\left(| \hat{d}-\E [\hat{d}] | \ \geq \ \frac{\varepsilon}{4}\delta_1 (1-\delta_2) \alpha\right) \\ 
&+ 
\Prob\left(|d(S_{\mathcal{M}})-d(S) | \ \geq \ (1-\delta_1)\alpha \right) \\
\triangleq& \ p_3+p_2+p_1.
\end{split}
\end{equation*}
For fixed input $S$, $d(S)$ is deterministic, $d(S_{\mathcal{M}})$ is random (due to sampling), $\hat{d}$ is random (as a sum of Bernoulli random variables), $\E[\hat{d}]$ is random (as a function of $d(S_{\mathcal{M}})$) and $\tilde{d}$ is random (as the sum of $\hat{d}$ and a Laplace random variable). We want $p_1+p_2+p_3 \leq \beta$. We bound each error probability separately, so for some $\delta_3, \delta_4 \in (0,1), \text{ such that } \delta_3 + \delta_4 < 1 $, we want $p_1<\delta_3 \beta , \ p_2<\delta_4 \beta , \ p_3<(1 - \delta_3 - \delta_4) \beta$. We obtain
\begin{equation*}
p_1 \leq \ 2 e^{-2m\alpha^2(1-\delta_1)^2}, \ p_2 \leq  2 e^{- \frac{1}{8} m \varepsilon^2 \alpha^2 \delta_1^2 (1-\delta_2)^2}, \ p_3 = e^{-\varepsilon \alpha m \delta_1 \delta_2}
\end{equation*}
and, therefore,
\begin{equation*}
p_1 \leq \delta_3 \beta \ \Leftrightarrow \ m \geq \underbrace{\frac{1}{2 \alpha^2 (1-\delta_1)^2} \log(\frac{2}{\beta \delta_3})}_{m_1} \ = \ \mathcal{O}(\frac{\log(\frac{1}{\beta})}{\alpha^2}).
\end{equation*}
\begin{equation*}
p_2 \leq \delta_4 \beta \ \Leftrightarrow \ m \geq \underbrace{\frac{8 \log(\frac{2}{\beta \delta_4})}{\varepsilon^2 \alpha^2 \delta_1^2 (1-\delta_2)^2}}_{m_2}
\ = \ \mathcal{O}(\frac{\log(\frac{1}{\beta})}{\varepsilon^2 \alpha^2}).
\end{equation*}
\begin{equation*}
\begin{split}
p_3 \leq (1-\delta_3-\delta_4) \beta \ \Leftrightarrow \ m & \geq \underbrace{\frac{1}{\varepsilon \alpha \delta_1 \delta_2} \log(\frac{1}{\beta (1-\delta_3-\delta_4)})}_{m_3} \\
& = \mathcal{O}(\frac{\log(\frac{1}{\beta})}{\varepsilon \alpha}).
\end{split}
\end{equation*}
By picking $m \geq \max\{m_1,m_2,m_3\}=\mathcal{O}(\frac{1}{\varepsilon^2 \alpha^2}\log(\frac{1}{\beta}))$, we ensure that
\begin{equation*}
\begin{split}
\Prob\left(| \tilde{d}-d(S) | \ \geq \ \alpha \right) & \leq p_1+p_2+p_3 \\
& \leq \delta_3 \beta+\delta_4 \beta+(1-\delta_3-\delta_4) \beta \ = \ \beta,
\end{split}
\end{equation*}
which completes the proof.
\end{proofsketch}

\begin{toappendix}
\begin{proof}[Proof of Theorem~\ref{thm_DworkDE_acc2}]
Before presenting the proof of the theorem, we state the following useful lemma.
\begin{lemma}\label{lemma_absolute_sum}
For any random variables $X$ and $Y$ and for some $\alpha>0$ and $\delta \in (0,1)$,
\begin{equation*}
\Prob( |X+Y|>\alpha ) \ \leq \ \Prob ( |X|>\alpha \delta ) + \Prob( |Y| >\alpha (1-\delta) ).
\end{equation*}
\end{lemma}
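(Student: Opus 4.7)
The plan is to prove the lemma via the triangle inequality together with a union bound, which are the only two tools needed here. The key observation is that the event $\{|X+Y| > \alpha\}$ is contained in the union of the two events on the right-hand side, so the probabilistic inequality follows from monotonicity of probability and subadditivity.

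First I would establish the containment of events
\begin{equation*}
\{|X+Y| > \alpha\} \ \subseteq \ \{|X| > \alpha\delta\} \cup \{|Y| > \alpha(1-\delta)\}.
\end{equation*}
To see this, I would argue by contrapositive: suppose $|X| \leq \alpha\delta$ and $|Y| \leq \alpha(1-\delta)$ both hold. Then by the triangle inequality,
\begin{equation*}
|X+Y| \ \leq \ |X| + |Y| \ \leq \ \alpha\delta + \alpha(1-\delta) \ = \ \alpha,
\end{equation*}
so $|X+Y| > \alpha$ cannot hold. This establishes that the complement of the right-hand side is contained in the complement of the left-hand side, which is equivalent to the desired containment of events. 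Here the choice of split $\delta$ versus $1-\delta$ is precisely what makes the two bounds add up to $\alpha$, which is why any $\delta \in (0,1)$ works.

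Next I would apply monotonicity of $\Prob$ followed by the union bound to the containment just established:
\begin{equation*}
\Prob(|X+Y| > \alpha) \ \leq \ \Prob\left(\{|X| > \alpha\delta\} \cup \{|Y| > \alpha(1-\delta)\}\right),
\end{equation*}
and then
\begin{equation*}
\Prob\left(\{|X| > \alpha\delta\} \cup \{|Y| > \alpha(1-\delta)\}\right) \ \leq \ \Prob(|X| > \alpha\delta) + \Prob(|Y| > \alpha(1-\delta)),
\end{equation*}
which yields the claim. I do not expect any genuine obstacle in this proof: the statement is essentially a packaging of the triangle inequality with Boole's inequality, and the only point that requires any care is verifying that the split $\alpha\delta + \alpha(1-\delta) = \alpha$ is exact, so that no slack is lost in the containment step. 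No independence or distributional assumptions on $X$ and $Y$ are needed, which matches the generality of the statement.
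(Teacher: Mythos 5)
Your proof is correct and follows essentially the same route as the paper: establish the event containment $\{|X+Y|>\alpha\} \subseteq \{|X|>\alpha\delta\} \cup \{|Y|>\alpha(1-\delta)\}$ and conclude by the union bound. The only cosmetic difference is that you argue the containment by contrapositive via the triangle inequality, while the paper unfolds the two one-sided events directly; the content is identical.
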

\begin{proof}[Proof of Lemma~\ref{lemma_absolute_sum}]
We have that
\begin{equation*}
\begin{split}
\{ (X,Y): |X+Y|>\alpha \} 
&= 
\{ (X,Y): X+Y>\alpha \text{ or } X+Y<-\alpha \} \\
&= 
\{ (X,Y): X+Y>\alpha \} \cup \{ (X,Y): X+Y<-\alpha \} \\
&\subseteq 
\{ (X,Y): X>\alpha \delta \text{ or }  Y >\alpha (1-\delta) \} \\
& 
\ \ \ \ \ \cup \{ (X,Y): X<-\alpha \delta \text{ or }  Y <-\alpha (1-\delta) \} \\
&= 
\{ (X,Y): X>\alpha \delta \text{ or }  Y >\alpha (1-\delta) \\
&
\ \ \ \ \ \text{ or } X<-\alpha \delta \text{ or }  Y <-\alpha (1-\delta) \} \\
&=
\{ (X,Y): |X|>\alpha \delta \text{ or } |Y| >\alpha (1-\delta) \}
\end{split}
\end{equation*}
so it follows that
\begin{equation*}
\begin{split}
\Prob( |X+Y|>\alpha ) 
& \leq 
\Prob ( |X|>\alpha \delta \text{ or } |Y| >\alpha (1-\delta) ) \\
& \leq 
\Prob ( |X|>\alpha \delta ) + \Prob( |Y| >\alpha (1-\delta) )
\end{split}
\end{equation*}
where the last inequality follows from the union bound.
\end{proof}

Let $\hat{d}=\frac{1}{m}\sum_{i=1}^{m}{b_i}$. We apply Lemma \ref{lemma_absolute_sum} twice, so, for some $\alpha>0$ and $\delta_1, \delta_2 \in (0,1)$,
\begin{equation*}
\begin{split}
\Prob \left(|  \right. & \left. \tilde{d} - d(S) | \ \geq \ \alpha \right)
\leq
\Prob\left(| \tilde{d}-d(S_{\mathcal{M}}) | \ \geq \ \delta_1 \alpha \right) \\ 
&+ 
\Prob\left(|d(S_{\mathcal{M}})-d(S) | \ \geq \ (1-\delta_1)\alpha \right) \\
=&
\Prob\left(| \frac{4}{\varepsilon}\hat{d}-\frac{2}{\varepsilon}+\text{Laplace}(0,\frac{1}{\varepsilon m}) - d(S_{\mathcal{M}}) | \ \geq \ \delta_1 \alpha \right) \\ 
&+ 
\Prob\left(|d(S_{\mathcal{M}})-d(S) | \ \geq \ (1-\delta_1)\alpha \right) \\
=&
\Prob\left(| \hat{d}-\frac{1}{2}+\frac{\varepsilon}{4}\text{Laplace}(0,\frac{1}{\varepsilon m})-\frac{\varepsilon}{4} d(S_{\mathcal{M}}) | \ \geq \ \frac{\varepsilon}{4}\delta_1 \alpha \right) \\ 
&+ 
\Prob\left(|d(S_{\mathcal{M}})-d(S) | \ \geq \ (1-\delta_1)\alpha \right) \\
=&
\Prob\left(| \hat{d}-\E [\hat{d}] +\frac{\varepsilon}{4}\text{Laplace}(0,\frac{1}{\varepsilon m}) | \ \geq \ \frac{\varepsilon}{4}\delta_1 \alpha \right) \\ 
&+ 
\Prob\left(|d(S_{\mathcal{M}})-d(S) | \ \geq \ (1-\delta_1)\alpha \right) \\
\leq&
\Prob\left(|\frac{\varepsilon}{4}\text{Laplace}(0,\frac{1}{\varepsilon m}) | \ \geq \ \frac{\varepsilon}{4}\delta_1 \delta_2 \alpha \right) \\
&+
\Prob\left(| \hat{d}-\E [\hat{d}] | \ \geq \ \frac{\varepsilon}{4}\delta_1 (1-\delta_2) \alpha\right) \\ 
&+ 
\Prob\left(|d(S_{\mathcal{M}})-d(S) | \ \geq \ (1-\delta_1)\alpha \right).
\end{split}
\end{equation*}

We examine the quantities involved in the final expression.
For fixed input $S$, $d(S)$ is deterministic.
On the contrary, $d(S_{\mathcal{M}})$ is random (due to sampling), $\hat{d}$ is random (as a sum of Bernoulli random variables), $\E[\hat{d}]$ is random (as a function of $d(S_{\mathcal{M}})$) and $\tilde{d}$ is random (as the sum of $\hat{d}$ and a Laplace random variable).
We therefore have three sources of error to control, which correspond to the following probabilities.
\begin{equation*}
\begin{aligned}
p_1 
& = 
\Prob\left(|d(S_{\mathcal{M}})-d(S) | \ \geq \ (1-\delta_1)\alpha \right), \\
p_2 
& = 
\Prob\left(| \hat{d}-\E [\hat{d}] | \ \geq \ \frac{\varepsilon}{4}\delta_1 (1-\delta_2) \alpha\right), \\
p_3 
& = 
\Prob\left(|\frac{\varepsilon}{4}\text{Laplace}(0,\frac{1}{\varepsilon m}) | \ \geq \ \frac{\varepsilon}{4}\delta_1 \delta_2 \alpha \right).
\end{aligned}
\end{equation*}

We want $p_1+p_2+p_3 \leq \beta$. We bound each error probability separately, so for some $\delta_3, \delta_4 \in (0,1), \text{ such that } \delta_3 + \delta_4 < 1 $, we want $p_1<\delta_3 \beta , \ p_2<\delta_4 \beta , \ p_3<(1 - \delta_3 - \delta_4) \beta$. \\

$\bullet$ \underline{First, we bound $p_1$.}\\
We define, $\forall i \in \{1,...,m\}$,
\[ 
X_i = \left\{
\begin{array}{ll}
      0 , & \mathcal{M}[i] \not\in S_{\mathcal{M}},\\
      1 , & \mathcal{M}[i] \in S_{\mathcal{M}}. \\
\end{array} 
\right. 
\]
and, since $S_{\mathcal{M}}$ was sampled uniformly at random from $S$, $X_i \sim \text{Bernoulli}(d(S))$. Since $d(S_{\mathcal{M}}) = \frac{1}{m}\sum_{i=1}^m{X_i}$, we obtain $\E[d(S_{\mathcal{M}})]=d(S)$.

Also, since $d(S_{\mathcal{M}})$ is a weighted sum of i.i.d. Bernoulli random variables, we take an additive, two-sided Chernoff bound that gives
\begin{equation*}
\begin{split}
p_1 &= \Prob\left(|\frac{1}{m}\sum_{i=1}^m{X_i}-\frac{1}{m}\sum_{i=1}^m{\E[X_i]} |  \geq (1-\delta_1)\alpha \right) \\
&\leq \ 2 e^{-2m\alpha^2(1-\delta_1)^2}
\end{split}
\end{equation*}
and, therefore,
\begin{equation*}
p_1 \leq \delta_3 \beta \ \Leftrightarrow \ m \geq \underbrace{\frac{1}{2 \alpha^2 (1-\delta_1)^2} \log(\frac{2}{\beta \delta_3})}_{m_1} \ = \ \mathcal{O}(\frac{\log(\frac{1}{\beta})}{\alpha^2}).
\end{equation*}

$\bullet$ \underline{Next, we bound $p_2$.}\\
As we already mentioned, both $\hat{d}$ and $\E[\hat{d}]=\frac{1}{2} + \frac{\varepsilon}{4}d(S_{\mathcal{M}})$ are random. $d(S_{\mathcal{M}})$ takes values in the set $\mathcal{D} = \{0,\frac{1}{m},...,\frac{m-1}{m},1\}$, so, using the law of total probability,
\begin{equation*}
\begin{split}
p_2 = & \sum_{\underline{d} \in \mathcal{D}} \Prob\left(| \hat{d}-\E[\hat{d}] | \ \geq \ \frac{\varepsilon}{4}\delta_1 (1-\delta_2) \alpha\ \ | \ d(S_{\mathcal{M}}) = \underline{d} \right) \\ 
& \ \ \Prob\left(d(S_{\mathcal{M}}) = \underline{d}\right) .
\end{split}
\end{equation*}

We make the following two critical remarks:\\
i) For fixed $d(S_{\mathcal{M}}) = \underline{d}$, $\E[\hat{d}]$ is no longer random.\\
ii) For $i\in \{1,...,m\}$, let $d_i$ denote the distribution from which each $b_i$ is drawn from, so that $d_i=0 \Rightarrow b_i \sim \text{Bernoulli}(\frac{1}{2})$ and $d_i = 1 \Rightarrow b_i \sim \text{Bernoulli}(\frac{1}{2}+\frac{\varepsilon}{4})$. Once we fix $d(S_{\mathcal{M}}) = \underline{d}$, the $d_i$'s are not independent, as $\sum_{i=1}^m{d_i}=m\underline{d}$.
However, the $b_i$'s are independent, as we impose no constraint on them and each is drawn independently from a fixed distribution. Therefore, $\hat{d}$ is a sum of independent Poisson trials.

We again make use of an additive, two-sided Chernoff bound, so
\begin{equation*}
\begin{split}
\Prob \left(| \hat{d}-\E[\hat{d}] | \right. & \left. \geq \ \frac{\varepsilon}{4}\delta_1 (1-\delta_2) \alpha\ \ | \ d(S_{\mathcal{M}}) = \underline{d} \right) \\
& \leq 2 e^{-2m \varepsilon^2 \alpha^2 \delta_1^2 (1-\delta_2)^2 \frac{1}{16}}.
\end{split}
\end{equation*}

We observe that the bound is independent of $\underline{d}$, so
\begin{equation*}
\begin{split}
p_2 & \leq \sum_{\underline{d} \in \mathcal{D}}{ 2 e^{-2m \varepsilon^2 \alpha^2 \delta_1^2 (1-\delta_2)^2 \frac{1}{16}} \ \Prob\left(d(S_{\mathcal{M}}) = \underline{d}\right) } \\
& = 2 e^{- \frac{1}{8} m \varepsilon^2 \alpha^2 \delta_1^2 (1-\delta_2)^2}
\end{split}
\end{equation*}
and, therefore,
\begin{equation*}
p_2 \leq \delta_4 \beta \ \Leftrightarrow \ m \geq \underbrace{\frac{8 \log(\frac{2}{\beta \delta_4})}{\varepsilon^2 \alpha^2 \delta_1^2 (1-\delta_2)^2}}_{m_2}
\ = \ \mathcal{O}(\frac{\log(\frac{1}{\beta})}{\varepsilon^2 \alpha^2}).
\end{equation*}

$\bullet$ \underline{Finally, we compute $p_3$.}
\begin{equation*}
\begin{split}
p_3 
&= 
\Prob\left(|\frac{\varepsilon}{4}\text{Laplace}(0,\frac{1}{\varepsilon m}) | \ \geq \ \frac{\varepsilon}{4}\delta_1 \delta_2 \alpha \right) \\
&= 
\Prob\left(|\text{Laplace}(0,\frac{1}{\varepsilon m}) | \ \geq \ \delta_1 \delta_2 \alpha \right) \\
&=
\Prob\left(\text{Exponential}(\varepsilon m) \ \geq \ \delta_1 \delta_2 \alpha \right) \\
&=
e^{-\varepsilon \alpha m \delta_1 \delta_2}
\end{split}
\end{equation*}
and, therefore,
\begin{equation*}
\begin{split}
p_3 \leq (1-\delta_3-\delta_4) \beta \ \Leftrightarrow \ m & \geq \underbrace{\frac{1}{\varepsilon \alpha \delta_1 \delta_2} \log(\frac{1}{\beta (1-\delta_3-\delta_4)})}_{m_3} \\
& = \mathcal{O}(\frac{\log(\frac{1}{\beta})}{\varepsilon \alpha}).
\end{split}
\end{equation*}

By picking $m \geq \max\{m_1,m_2,m_3\}=\mathcal{O}(\frac{1}{\varepsilon^2 \alpha^2}\log(\frac{1}{\beta}))$, we ensure that
\begin{equation*}
\begin{split}
\Prob\left(| \tilde{d}-d(S) | \ \geq \ \alpha \right) & \leq p_1+p_2+p_3 \\
& \leq \delta_3 \beta+\delta_4 \beta+(1-\delta_3-\delta_4) \beta \ = \ \beta,
\end{split}
\end{equation*}
which completes the proof.
\end{proof}
\end{toappendix}

Theorem \ref{thm_DworkDE_acc2} offers an additive error guarantee, which may not be so useful if the density of the input stream is small.
Dwork et al. \cite{dwork2010pan} show how to modify their algorithm to obtain a multiplicative error guarantee, which is more meaningful in such cases.
We do not examine this point in our work.
Additionally, as we already stated, our parameterized proof allows us to optimally tune the parameters $\delta_1,\delta_2,\delta_3,\delta_4$ and, as a result, compute the tightest version of the bound we derive.
In particular, for fixed $\varepsilon,$ $\alpha,$ and $m$, the tightest bound $\beta$ on $\Prob\left(| \tilde{d}-d(S) | \ \geq \ \alpha \right)$ is computed by numerically solving the following optimization problem.

\begin{equation*}
\begin{aligned}
&\underset{\delta_1,\delta_2}{\text{minimize}} & 
& \beta(\delta_1,\delta_2)
= 
2 e^{-2m\alpha^2(1-\delta_1)^2}
+ 2 e^{- \frac{1}{8} m \varepsilon^2 \alpha^2 \delta_1^2 (1-\delta_2)^2}
\\
&&& + e^{-\varepsilon \alpha m \delta_1 \delta_2}
\\
& \text{subject to} &
& 0 \leq \delta_1 \leq 1 , \ 
0 \leq \delta_2 \leq1 .
\end{aligned}
\end{equation*}

\subsection{Picking the Optimal Sample Size}\label{subSEC_sample_size}
Theorem \ref{thm_DworkDE_acc2} provides an asymptotic expression for the sample size $m$ to achieve the desired approximation accuracy.
A question that arises is how we should pick $m$ in practice.
This is again achieved by taking advantage of our parameterized proof and numerically solving a similar optimization problem,
which allows us to compute the optimal (minimum) sample size $m^*$ that achieves the desired approximation accuracy, according to the bounds we derived.
Specifically, for fixed $\varepsilon,$ $\alpha,$ and $\beta$, we define
\begin{equation*}
\begin{split}
m(\delta_1,\delta_2,\delta_3,\delta_4) 
=&
\max \{ \ m_1 , m_2 , m_3 \ \} \\
=&
\max \{ 
\ \frac{\log(\frac{2}{\beta \delta_3})}{2 \alpha^2 (1-\delta_1)^2} ,
\ \frac{8 \log(\frac{2}{\beta \delta_4})}{\varepsilon^2 \alpha^2 \delta_1^2 (1-\delta_2)^2} ,\\
& \ \ \frac{\log(\frac{1}{\beta (1-\delta_3-\delta_4)})}{\varepsilon \alpha \delta_1 \delta_2} \ \}.
\end{split}
\end{equation*}
Then, we pick $m^*$ as the solution to the following optimization problem.
\begin{equation*}
\begin{aligned}
& \underset{\delta_1,\delta_2,\delta_3,\delta_4}{\text{minimize}} 
& &m(\delta_1,\delta_2,\delta_3,\delta_4) \\
& \text{subject to}
& & 0\leq\delta_i\leq1 , \ i \in \{1,2,3,4\},\\
& & & \delta_3+\delta_4\leq1 .
\end{aligned}
\end{equation*}

Figure \ref{FIG_sample_size_bound} illustrates the proposed sample size $m$ as a function of the privacy budget $\varepsilon$.
We recall that our approach allows us to compute the tightest version of the specific bound that we derive on the probability of error and, hence, on the sample size.
Nevertheless, we remark that the bound itself is not tight;
this is an experimental observation and indicates that the same accuracy can be achieved with even fewer samples.

\begin{figure}
    \centering
    \includegraphics[width=\columnwidth]{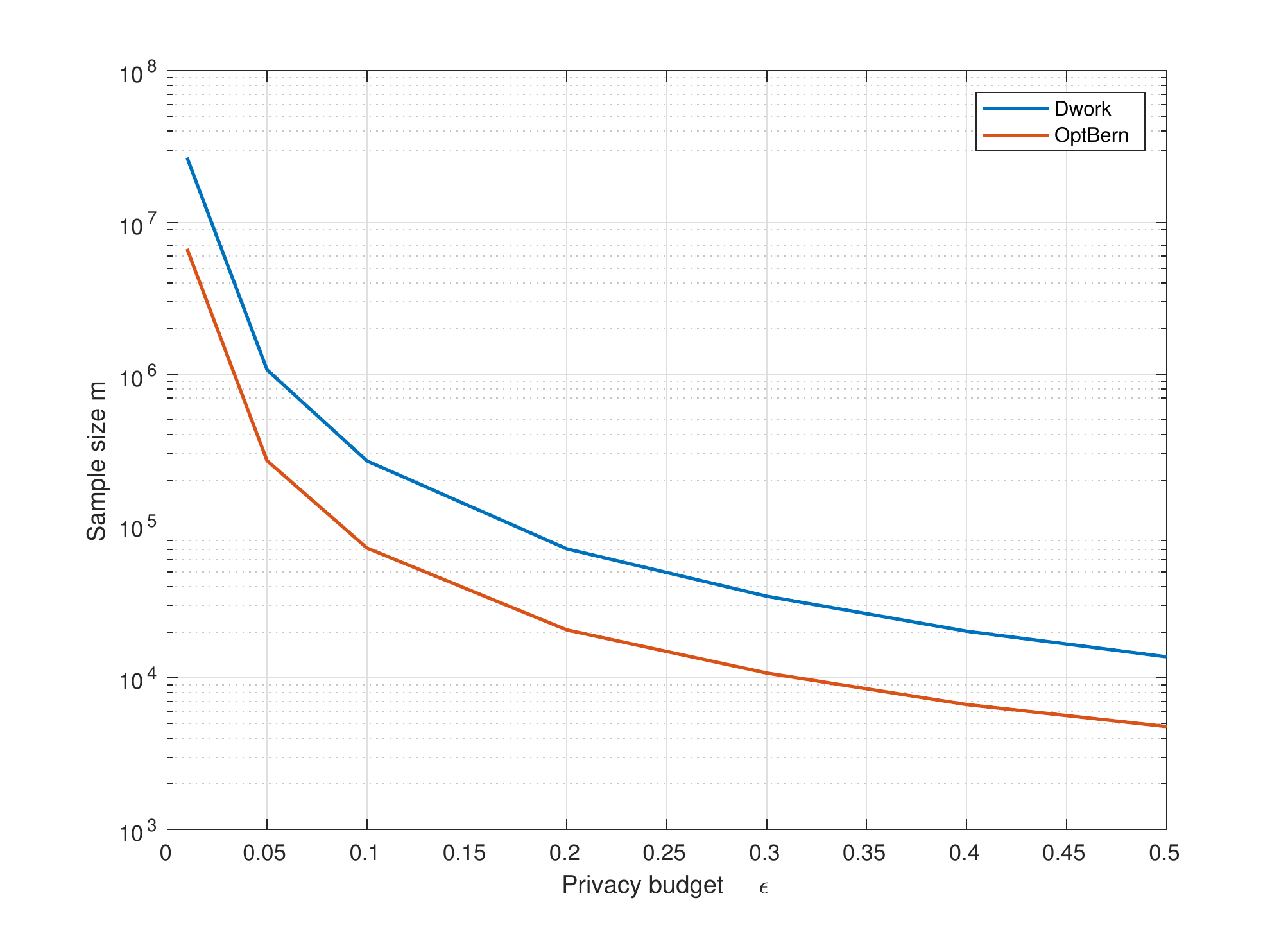}
    \caption{Proposed sample size for \texttt{Dwork} and \texttt{OptBern}.}
	\label{FIG_sample_size_bound}
\end{figure}

\section{Improved Pan-Private Density Estimator}
In this section, we modify Algorithm \ref{DworkDE} and derive a novel algorithm that significantly outperforms the original one (both theoretically and experimentally).
The key reason behind our algorithm's superiority is that, in contrast to Algorithm \ref{DworkDE}, it manages to use all the allocated privacy budget.

\subsection{On the Use of the Allocated Privacy Budget}
Recall that, to ensure that its state satisfies differential privacy, Algorithm \ref{DworkDE} utilizes two different distributions, one for users that do not appear in the stream and one for users that do appear.
We now introduce a little extra notation; the bit that corresponds to a user from the former category is drawn from the distribution with pmf $f_{\text{init}} = \text{Bernoulli}(\frac{1}{2})$, while the bit that corresponds to a user from the latter category is drawn from $f_{\text{upd}} = \text{Bernoulli}(\frac{1}{2}+\frac{\varepsilon}{4})$.
Although by $f_{\text{init}}$ and $f_{\text{upd}}$ we formally denote the probability mass functions of the two Bernoulli distributions, at some points we use the same notation to refer to the distributions themselves.
To satisfy differential privacy, we have to ensure that, $\forall b \in \{0,1\}$,
$$e^{-\varepsilon} \leq R(b)=\frac{f_{\text{upd}}(b)}{f_{\text{init}}(b)} \leq e^{\varepsilon}.$$
Figure \ref{FIG_privacy_budget} presents the ratios $R(0)$ and $R(1)$ for Algorithm \ref{DworkDE} as a function of the privacy budget $\varepsilon$.
Although Algorithm \ref{DworkDE} does ensure that its state satisfies differential privacy (as we have already proved in Theorem \ref{thm_DworkDE_priv}), it fails to use all the allocated privacy budget.
\begin{figure}
    \centering
    \includegraphics[width=0.75\columnwidth]{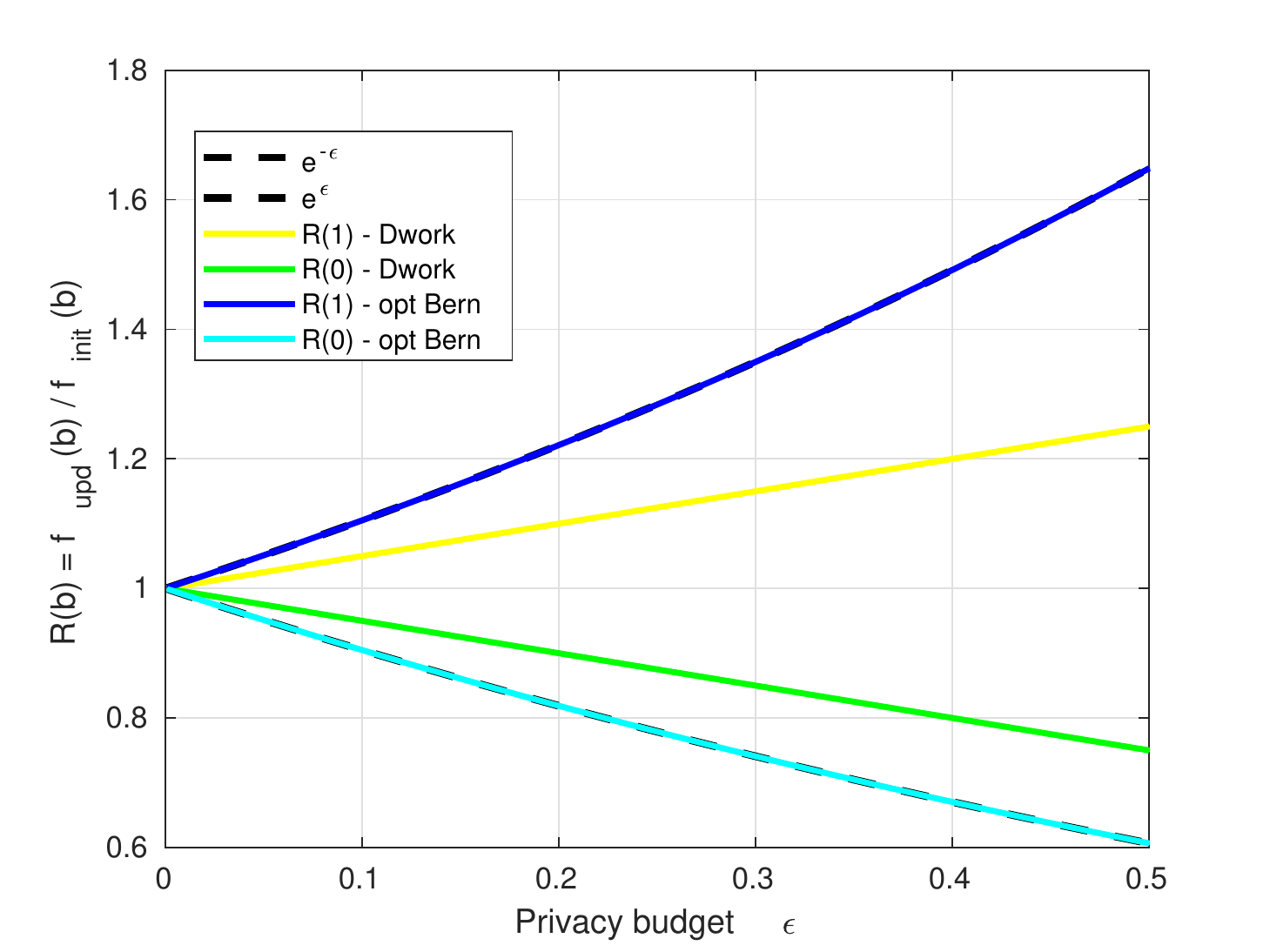}
    \caption{State differential privacy of \texttt{Dwork} and \texttt{OptBern}.}
	\label{FIG_privacy_budget}
\end{figure} 
Indeed, we show (in the proof of Theorem \ref{thm_DworkDE_priv} in the Appendix) that $R(0)=1-\frac{\varepsilon}{2}$ and $R(1)=1+\frac{\varepsilon}{2}$. Let $\varepsilon_{\text{actual}}>0$ be the actual privacy budget that Algorithm \ref{DworkDE} consumes. Then,
\[ 
\left.
\begin{array}{ll}
     e^{-\varepsilon_{\text{actual}}} \leq 1-\frac{\varepsilon}{2} \leq e^{\varepsilon_{\text{actual}}}\\
     e^{-\varepsilon_{\text{actual}}} \leq 1+\frac{\varepsilon}{2} \leq e^{\varepsilon_{\text{actual}}} \\
\end{array} 
\right\}
\Rightarrow
\]
$${\varepsilon_{\qquad \text{actual}}} \geq \max \left\{\ \log(1+\frac{\varepsilon}{2}) , \ -\log(1-\frac{\varepsilon}{2}) \ \right\}.$$
In Figure \ref{FIG_actual_privacy_budget}, the actual privacy budget $\varepsilon_{\text{actual}}$ used by Algorithm \ref{DworkDE} is plotted as a function of the allocated privacy budget $\varepsilon_{\text{allocated}} = \varepsilon$.

\begin{figure}
    \centering
    \includegraphics[width=0.75\columnwidth]{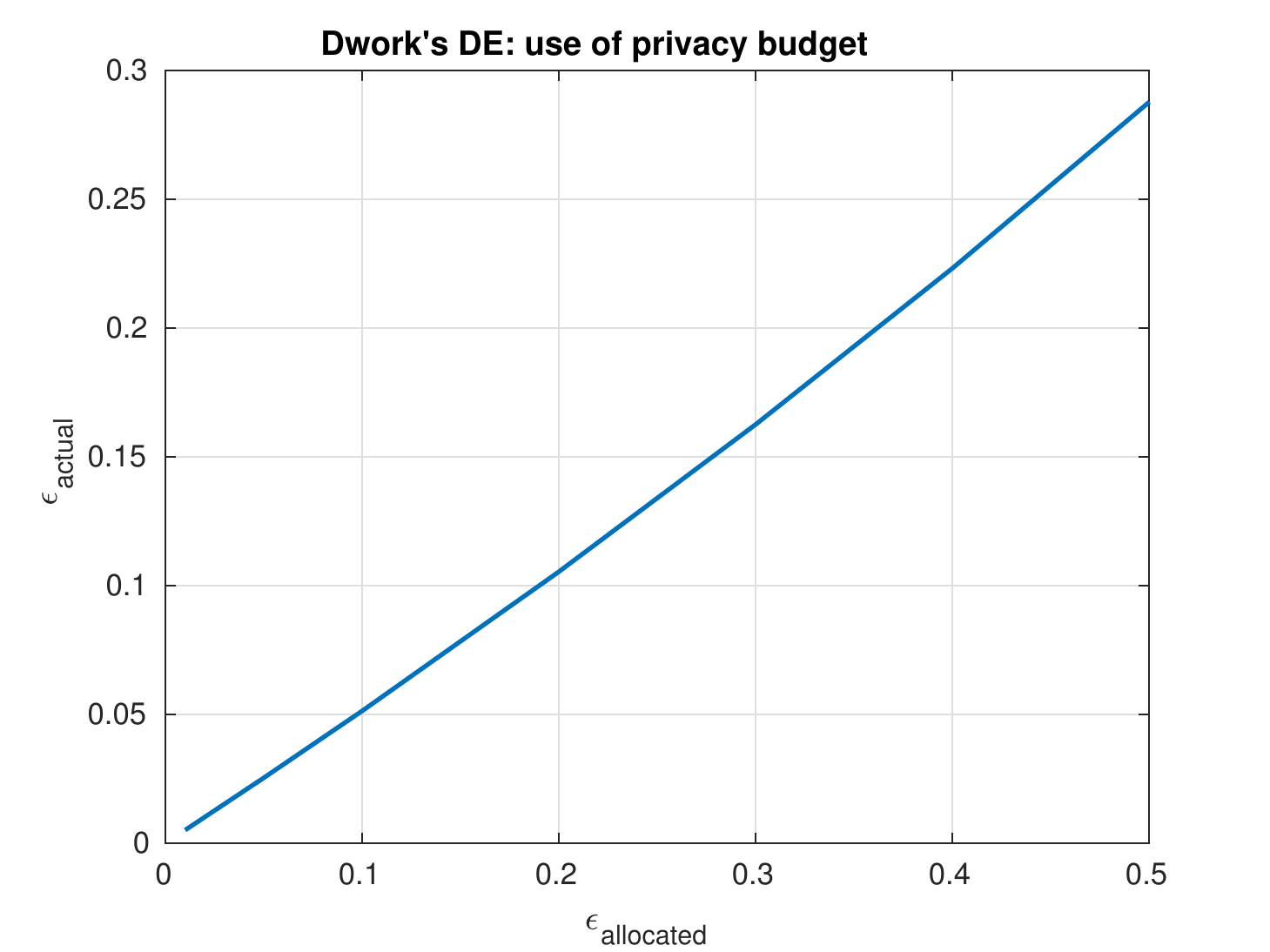}
    \caption{Allocated vs actual privacy budget of \texttt{Dwork}.}
	\label{FIG_actual_privacy_budget}
\end{figure}

\subsection{Optimally Tuning the Bernoulli Distributions} \label{sec::bernoulli-optimal-tuning}
Based on the aforementioned observation, for any given $\varepsilon$, we optimally tune the Bernoulli distributions used (with pmf's $f_{\text{init}}$ and $f_{\text{upd}}$), by picking a pair of parameters that tightly satisfies $\varepsilon$-differential privacy.
In particular, we maximize the distributions' distance (or, more precisely, the difference of their parameters), which allows us to more accurately distinguish between users that did not appear and users that appeared in the stream during the density estimation computation (without sacrificing the users' privacy!).

We propose that the parameters of $f_{\text{init}}$ and $f_{\text{upd}}$ be picked symmetric around some value $c$; that is, we end up with the distributions $f_{\text{init}} = \text{Bernoulli}(c-x)$, $f_{\text{upd}} = \text{Bernoulli}(c+x),$
for some $x > 0$ that corresponds to half the difference of the distributions' parameters. 
Apparently, we want $0<c-x<c+x<1$.
For example, in Algorithm \ref{DworkDE}, $c=\frac{1}{2}+\frac{\varepsilon}{8}$ and $x=\frac{\varepsilon}{8}$, so that $0 < c-x=\frac{1}{2} < c+x=\frac{1}{2}+\frac{\varepsilon}{4} < 1 , \ \forall \varepsilon \in (0,\frac{1}{2}]$. 

The modification we propose is in the selection of $x$.
Again, to satisfy differential privacy, we have to ensure that, $\forall b \in \{0,1\}$,
\begin{equation*}
\begin{split}
e^{-\varepsilon} \leq \frac{f_{\text{upd}}(b)}{f_{\text{init}}(b)} \leq e^{\varepsilon}
& \Leftrightarrow
\left\{
\begin{array}{ll}
     e^{-\varepsilon} \leq \frac{c+x}{c-x} \leq e^{\varepsilon} \\
     e^{-\varepsilon} \leq \frac{c-x}{c+x} \leq e^{\varepsilon} \\
\end{array} 
\right.\\
& \Leftrightarrow
x \ \leq \ \frac{e^{\varepsilon}-1}{e^{\varepsilon}+1} \ c
\ = \ \tanh(\frac{\varepsilon}{2}) c .
\end{split}
\end{equation*}
Since our goal was to maximize the difference of the distributions' parameters, we pick $x=\tanh(\frac{\varepsilon}{2})c$.
The proposed distributions are
\begin{equation*}
\begin{split}
f_{\text{init}} = \text{Bernoulli}\left(c \ (1-\tanh(\frac{\varepsilon}{2})\right), \\
f_{\text{upd}} = \text{Bernoulli}\left(c \ (1+\tanh(\frac{\varepsilon}{2})\right),
\end{split}
\end{equation*}
and, since both parameters have to be greater than zero and less than one, we can determine the values that $c$ can take for each fixed value of $\varepsilon$. For $0 < \varepsilon \leq \frac{1}{2}$, it is easy to see that $0 < \tanh(\frac{\varepsilon}{2}) < \frac{1}{4}$ (since $\tanh(\frac{\varepsilon}{2})$ is a monotonically increasing function of $\varepsilon$), so we can pick any $c \in (0,\frac{4}{5}]$, regardless of the specific value of $\varepsilon \in (0,\frac{1}{2}]$.
Indeed, Figure \ref{FIG_privacy_budget} illustrates that the proposed Bernoulli distributions use all the allocated privacy budget.

\subsection{Estimator \& Analysis}
We now present the modified algorithm (Algorithm \ref{OptBernDE}), which we call \verb|OptBern| (Optimal Bernoulli Density Estimator). For simplicity, we set $c=\frac{1}{2}$.

\begin{algorithm}

\caption{\texttt{OptBern} \label{OptBernDE}}

\DontPrintSemicolon

\KwIn{Data stream $S$, Privacy budget $\varepsilon$, Accuracy parameters $(\alpha,\beta)$}

\KwOut{Density $\tilde{d}(S)$}

Compute $m^*$ and set $m=m^*$\;

Sample a random subset $\mathcal{M} \subseteq \mathcal{U}$ of $m$ users (without replacement) and define an arbitrary ordering over $\mathcal{M}$\;

Create a bitarray $\mathbf{b}=[b_1\ ...\ b_{m}]$ and map $\mathcal{M}[i] \rightarrow b_i,\ \forall \ i \in \{1,...,m\}$\;

Initialize $\mathbf{b}$ randomly: $b_i \sim \text{Bernoulli}(\frac{1}{2}(1-\tanh(\frac{\varepsilon}{2}))), \ \forall \ i \in \{1,...,m\}$\;

\For{t = 1 \KwTo T}{

	\If{$s_t \in \mathcal{M}$}{
		
		Find $i:\ \mathcal{M}[i]=s_t$\;
		
		Re-sample: $b_i \sim \text{Bernoulli}(\frac{1}{2}(1+\tanh(\frac{\varepsilon}{2})))$\;
			
	}
}

Return $\tilde{d}(S)\ = \ \frac{1}{\tanh(\frac{\varepsilon}{2})}(\frac{1}{m}\sum_{i=1}^{m}{b_i}-\frac{1}{2} + \frac{1}{2} \tanh(\frac{\varepsilon}{2}) ) \ + \ \text{Laplace}(0,\frac{1}{\varepsilon m})$\;

\end{algorithm}

Next, we proceed with the privacy and accuracy analysis of Algorithm \ref{OptBernDE}, following the lines of our analysis of Algorithm \ref{DworkDE}.

\begin{thm}\label{thm_OptBernDE_priv}
If $\varepsilon \leq \frac{1}{2}$, then Algorithm \ref{OptBernDE} satisfies $2\varepsilon$-pan-privacy and utilizes all the allocated privacy budget.
\end{thm}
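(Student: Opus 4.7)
The plan is to mirror the proof of Theorem~\ref{thm_DworkDE_priv} almost verbatim, since Algorithm~\ref{OptBernDE} differs from Algorithm~\ref{DworkDE} only in the Bernoulli parameters (and the corresponding rescaling in the output formula). Pan-privacy is established in three pieces exactly as before: (i) $\varepsilon$-differential privacy of the bitarray $\mathbf{b}$ under adjacent streams, (ii) $\varepsilon$-differential privacy of the released $\tilde{d}$ given $\mathbf{b}$, via the Laplace mechanism argument, and (iii) composition to yield $2\varepsilon$-pan-privacy for the joint $(\mathbf{b},\tilde{d})$ view. The novel content sits entirely in step (i), where the ratio analysis must be redone with the new parameters.

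For step (i), fix adjacent streams $S,S'$ that differ on all updates for some user $u$, WLOG $u\in S$ and $u\notin S'$. The case $u\notin\mathcal{M}$ is trivial (perfect privacy) as in Theorem~\ref{thm_DworkDE_priv}. In the case $u\in\mathcal{M}$, let $b_1$ be the entry assigned to $u$ and consider the worst case in which $u$ has already appeared in the stream; then $b_1(S)\sim\text{Bernoulli}(\tfrac{1}{2}(1+\tanh(\varepsilon/2)))$ while $b_1(S')\sim\text{Bernoulli}(\tfrac{1}{2}(1-\tanh(\varepsilon/2)))$. The key computation is the hyperbolic identity
$$\frac{1+\tanh(\varepsilon/2)}{1-\tanh(\varepsilon/2)} \ = \ e^{\varepsilon},$$
which follows from $\tanh(\varepsilon/2)=\frac{e^{\varepsilon}-1}{e^{\varepsilon}+1}$; symmetrically the ratio at bit value $0$ equals $e^{-\varepsilon}$. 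These ratios establish $\varepsilon$-differential privacy of the state, and because they are attained with \emph{equality}, they simultaneously certify the second clause of the theorem: the mechanism uses all the allocated privacy budget, since the pair of parameters $\tfrac{1}{2}(1\pm\tanh(\varepsilon/2))$ is exactly the extremal symmetric-around-$\tfrac12$ pair compatible with $\varepsilon$-DP, as argued in Section~\ref{sec::bernoulli-optimal-tuning}.

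For step (ii) I would simply invoke the same Laplace-mechanism reasoning used for $\tilde{d}$ in Theorem~\ref{thm_DworkDE_priv}: conditional on $\mathbf{b}$, $\tilde{d}$ is a deterministic function of $\mathbf{b}$ perturbed by $\text{Laplace}(0,\tfrac{1}{\varepsilon m})$, so its distribution is independent of whether $S$ or $S'$ produced the observed $\mathbf{b}$. Combining (i) and (ii) through the same conditioning identity as in the earlier proof yields
$$\Prob\!\left(\mathbf{b}(S)=\underline{\mathbf{b}},\,\tilde{d}(S)=\underline{\tilde{d}}\right) \ \leq \ e^{2\varepsilon}\,\Prob\!\left(\mathbf{b}(S')=\underline{\mathbf{b}},\,\tilde{d}(S')=\underline{\tilde{d}}\right),$$
which is precisely the definition of $2\varepsilon$-pan-privacy.

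The only genuinely new step is the $\tanh$ identity that collapses the ratio to $e^{\varepsilon}$ exactly; once this is in hand, everything else is routine bookkeeping inherited from the proof of Theorem~\ref{thm_DworkDE_priv}. A small side check, already covered in Section~\ref{sec::bernoulli-optimal-tuning}, is that for $\varepsilon\in(0,\tfrac{1}{2}]$ both parameters $\tfrac{1}{2}(1\pm\tanh(\varepsilon/2))$ lie strictly in $(0,1)$, so the Bernoulli distributions are well-defined and the hypothesis $\varepsilon\leq\tfrac12$ is used only at this point (in contrast to Theorem~\ref{thm_DworkDE_priv}, where it was additionally needed to bound the linear ratios $1\pm\varepsilon/2$ by $e^{\pm\varepsilon}$).
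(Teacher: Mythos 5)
Your proof is correct and follows essentially the same route as the paper: both reduce to verifying that the ratio of the re-tuned Bernoulli parameters at each bit value collapses to exactly $e^{\pm\varepsilon}$ (the paper writes $\tanh(\varepsilon/2)$ as $\frac{e^{\varepsilon}-1}{e^{\varepsilon}+1}$, you invoke the equivalent hyperbolic identity), and both then inherit the Laplace-mechanism step and the $2\varepsilon$ composition unchanged from Theorem~\ref{thm_DworkDE_priv}. One small imprecision: the parameters $\frac{1}{2}(1\pm\tanh(\varepsilon/2))$ lie in $(0,1)$ for every $\varepsilon>0$, not just $\varepsilon\leq\frac{1}{2}$, so the hypothesis is not actually load-bearing in this proof at all; the paper keeps it only for symmetry with the other analyses.
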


\begin{toappendix}
\begin{proof}[Proof of Theorem~\ref{thm_OptBernDE_priv}]
The proof is identical to that of Theorem \ref{thm_DworkDE_priv}.
The only modification is on proving that the state (bitarray) satisfies $\varepsilon$-differential privacy. In particular, for a user $u \in \mathcal{M}$ that appears in stream $S$ and does not appear in stream $S'$ (again, let $b_1$ be the entry that corresponds to $u$ in the bitarray), we have
\begin{equation*}
\begin{split}
\frac{\Prob\left(\mathbf{b(S)}=[1 \ \underline{b_2} \ ...\ \underline{b_{m}}]\right)}{\Prob\left(\mathbf{b(S')}=[1 \ \underline{b_2} \ ...\ \underline{b_{m}}]\right)} 
&=
\frac{\Prob\left(b_1(S)=1\right)}{\Prob\left(b_1(S')=1\right)} \\
&=
\ \frac{ \frac{1}{2}(1+\frac{e^{\varepsilon}-1}{e^{\varepsilon}+1}) }{ \frac{1}{2}(1-\frac{e^{\varepsilon}-1}{e^{\varepsilon}+1}) } 
= 
\ e^{\varepsilon} , \\
\frac{\Prob\left(\mathbf{b(S)}=[0 \ \underline{b_2} \ ...\ \underline{b_{m}}]\right)}{\Prob\left(\mathbf{b(S')}=[0 \ \underline{b_2} \ ...\ \underline{b_{m}}]\right)} 
&= 
\frac{\Prob\left(b_1(S)=0\right)}{\Prob\left(b_1(S')=0\right)} \\
&=
\ \frac{ \frac{1}{2}(1-\frac{e^{\varepsilon}-1}{e^{\varepsilon}+1}) }{ \frac{1}{2}(1+\frac{e^{\varepsilon}-1}{e^{\varepsilon}+1}) } 
= 
\ e^{-\varepsilon},
\end{split}
\end{equation*}
so user $u$ is guaranteed $\varepsilon$-differential privacy against an adversary that observes $u$'s entry in $\mathbf{b}$.
\end{proof}
\end{toappendix}

\begin{thm}\label{thm_OptBernDE_acc1}
For a fixed sample $\mathcal{M}$, Algorithm \ref{OptBernDE} provides an unbiased estimate $\tilde{d}$ of the density of $S_{\mathcal{M}}$ and has mean squared error
$$\E [ \ (\tilde{d}-d(S_{\mathcal{M}}))^2 \ ] \ \leq \ \frac{1}{4m \tanh^2(\frac{\varepsilon}{2})} + \frac{2}{m^2 \varepsilon^2}.$$
\end{thm}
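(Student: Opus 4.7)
My plan is to follow the template of the proof of Theorem~\ref{thm_DworkDE_acc1}, adapting each step to the new Bernoulli parameters produced by the symmetric tuning. Writing $\tau \triangleq \tanh(\varepsilon/2)$ for brevity, the bits of the internal bitarray now satisfy $b_i \sim \text{Bernoulli}\!\left(\frac{1-\tau}{2}\right)$ when $\mathcal{M}[i] \notin S_{\mathcal{M}}$ and $b_i \sim \text{Bernoulli}\!\left(\frac{1+\tau}{2}\right)$ when $\mathcal{M}[i] \in S_{\mathcal{M}}$, and this distribution again depends only on whether the user has appeared, not on how many times. The decoded estimator is $\tilde{d} = \frac{1}{\tau}\bigl(\hat{d} - \tfrac{1}{2} + \tfrac{\tau}{2}\bigr) + \text{Laplace}(0, \frac{1}{\varepsilon m})$ with $\hat{d} = \frac{1}{m}\sum_i b_i$.

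For \emph{unbiasedness}, I would compute $\E[b_i] = \frac{1-\tau}{2}(1 - d(S_{\mathcal{M}})) + \frac{1+\tau}{2}\,d(S_{\mathcal{M}}) = \frac{1-\tau}{2} + \tau\, d(S_{\mathcal{M}})$, so that $\E[\hat{d}] = \frac{1-\tau}{2} + \tau\, d(S_{\mathcal{M}})$ (interpreting the fraction of sampled users in the sub-stream as $d(S_{\mathcal{M}})$ exactly as in the proof of Theorem~\ref{thm_DworkDE_acc1}). Substituting into the definition of $\tilde{d}$ and using $\E[\text{Laplace}(0,\cdot)] = 0$ collapses everything to $d(S_{\mathcal{M}})$, which gives the unbiasedness claim.

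For the \emph{mean squared error}, since $\tilde{d}$ is unbiased, MSE equals variance. The key structural observation — and the reason the new bound is cleaner than the $\frac{2(2m+1)}{m^2\varepsilon^2}$ bound of Theorem~\ref{thm_DworkDE_acc1} — is that the symmetric tuning around $1/2$ makes both Bernoulli parameters equidistant from $1/2$, so each $b_i$ has variance $p(1-p) = \frac{1-\tau}{2}\cdot\frac{1+\tau}{2} = \frac{1-\tau^2}{4}$ in either case. For a fixed sample $\mathcal{M}$ the bits are independent, so $\operatorname{var}(\hat{d}) = \frac{1}{m^2}\sum_{i=1}^m \operatorname{var}(b_i) = \frac{1-\tau^2}{4m}$. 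Independence of the Laplace noise from $\hat{d}$ then yields
\begin{equation*}
\operatorname{var}(\tilde{d}) \;=\; \frac{1}{\tau^2}\operatorname{var}(\hat{d}) + \operatorname{var}\!\left(\text{Laplace}\!\left(0,\tfrac{1}{\varepsilon m}\right)\right) \;=\; \frac{1-\tau^2}{4m\tau^2} + \frac{2}{m^2\varepsilon^2},
\end{equation*}
and dropping the negative $-\frac{1}{4m}$ term gives the stated bound $\frac{1}{4m\tanh^2(\varepsilon/2)} + \frac{2}{m^2\varepsilon^2}$.

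I do not anticipate a serious obstacle; the only point that requires a bit of care is to avoid being drawn into an unnecessary use of the law of total variance (as was used for \texttt{Dwork}), since here the per-bit variance is the \emph{same} constant $\frac{1-\tau^2}{4}$ regardless of whether the user appears, so a direct summation of variances is both correct and sharper. One should also be careful to reuse the same conditioning convention as Theorem~\ref{thm_DworkDE_acc1} when identifying $\Prob(\mathcal{M}[i] \in S_{\mathcal{M}})$ with $d(S_{\mathcal{M}})$ so that the unbiasedness step remains consistent with the notion of ``fixed sample'' in the statement.
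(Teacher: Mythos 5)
Your proof is correct and reaches the stated bound, but it takes a genuinely cleaner route than the paper's own proof. The paper applies the law of total variance to $b_i$, treating membership of $\mathcal{M}[i]$ in $S_{\mathcal{M}}$ as a Bernoulli$(d(S_{\mathcal{M}}))$ event, which produces an extra additive term $d(S_{\mathcal{M}})(1-d(S_{\mathcal{M}}))\tanh^2(\varepsilon/2)$ in $\operatorname{var}(b_i)$ that must then be bounded by $\tfrac{1}{4}$. You observe instead that once $\mathcal{M}$ and $S$ are fixed, the appearance status of each sampled user is deterministic, and — crucially — the symmetric tuning around $\tfrac{1}{2}$ makes both Bernoulli parameters equidistant from $\tfrac{1}{2}$, so $\operatorname{var}(b_i) = \frac{1-\tau^2}{4}$ irrespective of appearance. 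This renders the law of total variance superfluous, gives the exact variance $\frac{1-\tau^2}{4m\tau^2} + \frac{2}{m^2\varepsilon^2}$ (tighter than the paper's intermediate expression, which still carries the $d(1-d)/m$ term), and collapses to the stated bound by discarding the negative $-\tfrac{1}{4m}$. Your reading is also arguably more faithful to the hypothesis ``for a fixed sample $\mathcal{M}$'' than the paper's, which implicitly randomizes over which sampled users appear. The unbiasedness calculation you give agrees with the paper's under either interpretation, since the two modelings yield the same $\E[b_i]$.

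One small stylistic wrinkle: in your bias step you write $\E[b_i]$ as a $d$-weighted convex combination (the paper's probabilistic phrasing), while in the variance step you switch to the deterministic-membership view. For the unbiasedness claim this is harmless because both readings produce the same expectation, but for consistency with your own variance argument you could equivalently write $\E[\hat d] = \frac{1}{m}\bigl[n\cdot\frac{1+\tau}{2} + (m-n)\cdot\frac{1-\tau}{2}\bigr]$ with $n = m\,d(S_{\mathcal{M}})$ the number of sampled users that actually appear. This makes both halves of the proof rest on the same (deterministic-membership) modeling and removes any ambiguity.
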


\begin{toappendix}
\begin{proof}[Proof of Theorem~\ref{thm_OptBernDE_acc1}]
The proof is similar to that of Theorem \ref{thm_DworkDE_acc1}.\\

We begin with the \emph{bias} computation.
To examine the distribution of an arbitrary entry in $\textbf{b}$, we introduce the notation $p_{\text{init}}=\frac{1}{2}(1-\tanh(\frac{\varepsilon}{2}))$ and $p_{\text{upd}}=\frac{1}{2}(1+\tanh(\frac{\varepsilon}{2}))$. Then,
\[ 
b_i \sim \left\{
\begin{array}{ll}
      \text{Bernoulli}(p_{\text{init}}) , & \mathcal{M}[i] \not\in S_{\mathcal{M}},\\
      \text{Bernoulli}(p_{\text{upd}}) , & \mathcal{M}[i] \in S_{\mathcal{M}}. \\
\end{array} 
\right. 
\]

Denoting $\hat{d}=\frac{1}{m}\sum_{i=1}^{m}{b_i}$ and applying a computation similar to the one used in the proof of Theorem \ref{thm_DworkDE_acc1}, we obtain 
\begin{equation*}
\begin{split}
\E[\hat{d}] 
&=
p_{\text{init}} + (p_{\text{upd}}-p_{\text{init}})d(S_{\mathcal{M}})  \\
&= \frac{1}{2}(1-\tanh(\frac{\varepsilon}{2})) + \tanh(\frac{\varepsilon}{2}) d(S_{\mathcal{M}}).
\end{split}
\end{equation*}

The final estimate (output) $\tilde{d}$ is
\begin{equation*}
\begin{split}
\tilde{d}
&=
\frac{\hat{d}-p_{\text{init}}}{p_{\text{upd}}-p_{\text{init}}} + \text{Laplace}(0,\frac{1}{\varepsilon m}) \\
&=
\frac{1}{\tanh(\frac{\varepsilon}{2})}(\hat{d}-\frac{1}{2} + \frac{1}{2} \tanh(\frac{\varepsilon}{2}) ) \ + \ \text{Laplace}(0,\frac{1}{\varepsilon m}),
\end{split}
\end{equation*}
so $\tilde{d}$ is an unbiased estimate.\\

We proceed with the \emph{mean squared error}. Again, since $\tilde{d}$ is an unbiased estimate of $d(S_{\mathcal{M}})$, its mean squared error coincides with its variance.
Also, recall that we have $m$ Bernoulli random variables $b_i \ (i=1,...,m)$, so applying the law of total variance gives us
\begin{equation*}
\begin{split}
\text{var}(b_i) 
= & 
p_{\text{init}}(1-p_{\text{init}}) \\
& + 
d(S_{\mathcal{M}})[-p_{\text{init}}(1-p_{\text{init}}) + p_{\text{upd}}(1-p_{\text{upd}})] \\
& + 
d(S_{\mathcal{M}})(1-d(S_{\mathcal{M}}))(p_{upd}-p_{init})^2 \\
= &
\frac{1}{4}[1-\tanh^2(\frac{\varepsilon}{2})] \\
& + d(S_{\mathcal{M}})(1-d(S_{\mathcal{M}})) \tanh^2(\frac{\varepsilon}{2}) \\
\Rightarrow \text{var}(\hat{d}) 
= & 
\frac{1}{m^2}\sum_{i=1}^{m}{ \text{var}(b_i)} \\
= &
\frac{1}{4m}[1-\tanh^2(\frac{\varepsilon}{2})] \\
& + 
\frac{d(S_{\mathcal{M}})(1-d(S_{\mathcal{M}}))}{m} \tanh^2(\frac{\varepsilon}{2}) \\
\Rightarrow \text{var}(\tilde{d}) 
= &
(\frac{1}{p_{upd}-p_{init}})^2 \text{var}(\hat{d}) + \text{var}(\text{Laplace}(0,\frac{1}{\varepsilon m})) \\
= &
\frac{1}{4m}[\frac{1}{\tanh^2(\frac{\varepsilon}{2})}-1] \\
& + \frac{d(S_{\mathcal{M}})(1-d(S_{\mathcal{M}}))}{m} + \frac{2}{m^2 \varepsilon^2}\\
\leq &
\frac{1}{4m \tanh^2(\frac{\varepsilon}{2})} + \frac{2}{m^2 \varepsilon^2} ,
\end{split}
\end{equation*}
which completes the proof. To derive the last inequality, we used the fact that, since $0\leq d(S_{\mathcal{M}}) \leq 1$, it follows that $d(S_{\mathcal{M}})(1-d(S_{\mathcal{M}})) \leq \frac{1}{4}$.
\end{proof}
\end{toappendix}

\begin{thm}\label{thm_OptBernDE_acc2}
If the sample maintained by Algorithm \ref{OptBernDE} consists of $m = \mathcal{O}(\frac{1}{\varepsilon^2 \alpha^2}\log{\frac{1}{\beta}})$ users from $\mathcal{U}$, then, for fixed input $S$, 
$\Prob\left(| \tilde{d}-d(S) | \ \geq \ \alpha \right)\ \leq \ \beta$
where the probability space is over the random choices of the algorithm.
\end{thm}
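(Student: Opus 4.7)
The plan is to mirror the proof of Theorem~\ref{thm_DworkDE_acc2}, with the rescaling constant $4/\varepsilon$ replaced throughout by $1/\tanh(\varepsilon/2)$. Using the expression $\E[\hat{d}] = \tfrac{1}{2}(1-\tanh(\tfrac{\varepsilon}{2})) + \tanh(\tfrac{\varepsilon}{2})\, d(S_{\mathcal{M}})$ established in the proof of Theorem~\ref{thm_OptBernDE_acc1}, I would first rewrite the error as
\begin{equation*}
\tilde{d} - d(S) \;=\; \tfrac{1}{\tanh(\varepsilon/2)}\bigl(\hat{d}-\E[\hat{d}]\bigr) \;+\; \bigl(d(S_{\mathcal{M}})-d(S)\bigr) \;+\; \mathrm{Laplace}\bigl(0,\tfrac{1}{\varepsilon m}\bigr),
\end{equation*}
exposing the same three independent error sources as in the \texttt{Dwork} analysis: Bernoulli-estimation noise, sampling error, and output Laplace noise.

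Next I would apply the same triangle-style union bound twice, with splitting parameters $\delta_1,\delta_2\in(0,1)$, so that $\Pr(|\tilde{d}-d(S)|\geq\alpha)\leq p_1+p_2+p_3$, where $p_1=\Pr(|d(S_{\mathcal{M}})-d(S)|\geq(1-\delta_1)\alpha)$, $p_2=\Pr(|\hat{d}-\E[\hat{d}]|\geq\tanh(\tfrac{\varepsilon}{2})\delta_1(1-\delta_2)\alpha)$, and $p_3=\Pr(|\mathrm{Laplace}(0,\tfrac{1}{\varepsilon m})|\geq\delta_1\delta_2\alpha)$. For $p_1$, since $d(S_{\mathcal{M}})$ is a mean of $m$ i.i.d.\ $\mathrm{Bernoulli}(d(S))$ variables, a two-sided additive Chernoff bound gives $p_1\leq 2e^{-2m(1-\delta_1)^2\alpha^2}$. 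For $p_2$, conditioning on $d(S_{\mathcal{M}})$ makes the $b_i$ independent (heterogeneous) Bernoulli trials, so a Chernoff bound followed by the law of total probability yields $p_2\leq 2e^{-2m\tanh^2(\varepsilon/2)\delta_1^2(1-\delta_2)^2\alpha^2}$. For $p_3$, the Laplace tail evaluates directly to $e^{-\varepsilon m\delta_1\delta_2\alpha}$.

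The decisive observation that preserves the asymptotic rate is that $\tanh(\varepsilon/2)=\Theta(\varepsilon)$ on $\varepsilon\in(0,\tfrac{1}{2}]$, and concretely $\tanh(\varepsilon/2)\geq\varepsilon/4$ there (since $\tanh(x)\geq x-x^3/3\geq x/2$ for $x\leq 1/4$). Hence the exponent of $p_2$ scales as $\Theta(m\varepsilon^2\alpha^2)$, matching the dependence obtained for \texttt{Dwork}. Requiring $p_1\leq\delta_3\beta$, $p_2\leq\delta_4\beta$, and $p_3\leq(1-\delta_3-\delta_4)\beta$ and solving for $m$ yields sample-size requirements $m_1=\mathcal{O}(\alpha^{-2}\log(1/\beta))$, $m_2=\mathcal{O}(\varepsilon^{-2}\alpha^{-2}\log(1/\beta))$, and $m_3=\mathcal{O}(\varepsilon^{-1}\alpha^{-1}\log(1/\beta))$, whose maximum is the claimed $\mathcal{O}(\varepsilon^{-2}\alpha^{-2}\log(1/\beta))$. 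The main obstacle is essentially bookkeeping: verifying that the substitution $\varepsilon/4\mapsto\tanh(\varepsilon/2)$ never worsens any constant and that the constants absorbed into $\mathcal{O}(\cdot)$ remain uniform in $\varepsilon$, both of which follow from the explicit inequality $\tanh(\varepsilon/2)\geq\varepsilon/4$ on the admissible range.
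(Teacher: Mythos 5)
Your proposal is correct and follows essentially the same route as the paper's proof: the same decomposition of $\tilde{d}-d(S)$ into sampling, Bernoulli-estimation, and Laplace errors, the same pair of union-bound splits with $\delta_1,\delta_2$, the same Chernoff/Laplace tail bounds, and the same key observation that $\tanh(\varepsilon/2)=\Theta(\varepsilon)$ preserves the $\mathcal{O}(\varepsilon^{-2}\alpha^{-2}\log(1/\beta))$ rate (the paper uses the Taylor expansion $\tanh(x/2)=x/2-x^3/24+\mathcal{O}(x^5)$ where you give the explicit bound $\tanh(\varepsilon/2)\geq\varepsilon/4$, a cosmetic difference).
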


\begin{toappendix}
\begin{proof}[Proof of Theorem~\ref{thm_OptBernDE_acc2}]
The proof is similar to that of Theorem \ref{thm_DworkDE_acc2}, so we only focus on their differences.\\

Let $\hat{d}=\frac{1}{m}\sum_{i=1}^{m}{b_i}$. We again apply Lemma \ref{lemma_absolute_sum} twice, so, for some $\alpha>0$ and $\delta_1, \delta_2 \in (0,1)$,
\begin{equation*}
\begin{split}
\Prob & \left(| \tilde{d}-d(S) | \ \geq \ \alpha \right)
\leq 
\Prob\left(| \tilde{d}-d(S_{\mathcal{M}}) | \ \geq \ \delta_1 \alpha \right) \\
& +
\Prob\left(|d(S_{\mathcal{M}})-d(S) | \ \geq \ (1-\delta_1)\alpha \right) \\
= &
\Prob\left( | 
\frac{\hat{d}}{\tanh(\frac{\varepsilon}{2})}
- \frac{1}{2\tanh(\frac{\varepsilon}{2})} 
+ \frac{1}{2} 
+ \text{Laplace}(0,\frac{1}{\varepsilon m}) \right. \\
& \left. - d(S_{\mathcal{M}}) 
| \geq \delta_1 \alpha \right) 
\ + \ \Prob\left(|d(S_{\mathcal{M}})-d(S) | \ \geq \ (1-\delta_1)\alpha \right) \\
= & 
\Prob\left(| \hat{d}-\E [\hat{d}] +\tanh(\frac{\varepsilon}{2})\text{Laplace}(0,\frac{1}{\varepsilon m}) | \geq \tanh(\frac{\varepsilon}{2})\delta_1 \alpha \right) \\ 
& + 
\ \Prob\left(|d(S_{\mathcal{M}})-d(S) | \ \geq \ (1-\delta_1)\alpha \right) \\
\leq &
\Prob\left(|\tanh(\frac{\varepsilon}{2}) \text{Laplace}(0,\frac{1}{\varepsilon m}) | \ \geq \ \tanh(\frac{\varepsilon}{2})\delta_1 \delta_2 \alpha \right) \\
& +
\ \Prob\left(| \hat{d}-\E [\hat{d}] | \ \geq \ \tanh(\frac{\varepsilon}{2}) \delta_1 (1-\delta_2) \alpha\right) \\ 
& + 
\ \Prob\left(|d(S_{\mathcal{M}})-d(S) | \ \geq \ (1-\delta_1)\alpha \right) \\
= &
p_3 + p_2 + p_1 \ \text{ (respectively)}.
\end{split}
\end{equation*}

As in the proof of Theorem \ref{thm_DworkDE_acc2}, we have three sources of error to control and we want $p_1+p_2+p_3 \leq \beta$. We bound each error probability separately, so, for some $\delta_3, \delta_4 \in (0,1), \text{ such that } \delta_3 + \delta_4 < 1 $, we want $p_1<\delta_3 \beta , \ p_2<\delta_4 \beta , \ p_3<(1 - \delta_3 - \delta_4) \beta$. \\

Bounding $p_1$ and $p_3$ is identical, as these error probabilities are unchanged. We copy the bounds we derived in Theorem \ref{thm_DworkDE_acc2}.
\begin{equation*}
\begin{split}
p_1 
& \leq 
2 e^{-2m\alpha^2(1-\delta_1)^2} \leq \delta_3 \beta \\
& \Leftrightarrow 
m \geq \underbrace{\frac{1}{2 \alpha^2 (1-\delta_1)^2} \log(\frac{2}{\beta \delta_3})}_{m_1} 
= 
\mathcal{O}(\frac{1}{\alpha^2}\log(\frac{1}{\beta})) , \\
p_3 
& \leq 
e^{-\varepsilon \alpha m \delta_1 \delta_2} \leq (1-\delta_3-\delta_4) \beta \\
& \Leftrightarrow 
m \geq \underbrace{\frac{1}{\varepsilon \alpha \delta_1 \delta_2} \log(\frac{1}{\beta (1-\delta_3-\delta_4)})}_{m_3} 
= 
\mathcal{O}(\frac{1}{\varepsilon \alpha}\log(\frac{1}{\beta})).
\end{split}
\end{equation*}

We focus on $p_2$.
The difference is due to the fact that we now have $\E[\hat{d}]=\frac{1}{2}(1-\tanh(\frac{\varepsilon}{2})) + \tanh(\frac{\varepsilon}{2})d(S_{\mathcal{M}})$. Although $\E[\hat{d}]$ has changed, it still is a random mean (as $d(S_{\mathcal{M}})$ is random), so we again apply the total probability theorem
\begin{equation*}
\begin{split}
p_2 = & \sum_{\underline{d} \in \mathcal{D}} \Prob\left(| \hat{d}-\E [\hat{d}] | \geq \tanh(\frac{\varepsilon}{2}) \delta_1 (1-\delta_2) \alpha | d(S_{\mathcal{M}}) = \underline{d} \right) \\
& \ \ \Prob\left(d(S_{\mathcal{M}}) = \underline{d}\right).
\end{split}
\end{equation*}

Based on observations identical to those we made when proving Theorem \ref{thm_DworkDE_acc2}, we again make use of an additive, two-sided Chernoff bound
\begin{equation*}
\begin{split}
\Prob\left( | \hat{d}-\E [\hat{d}] | \right.
& \left. \geq 
\tanh(\frac{\varepsilon}{2}) \delta_1 (1-\delta_2) \alpha 
| d(S_{\mathcal{M}}) = \underline{d} \right) \\
& \leq 2 e^{-2m \tanh^2(\frac{\varepsilon}{2}) \alpha^2 \delta_1^2 (1-\delta_2)^2}
\end{split}
\end{equation*}

and, since the bound is independent of $\underline{d}$,
\begin{equation*}
\begin{split}
p_2 & \leq \sum_{\underline{d} \in \mathcal{D}}{ 2 e^{-2m \tanh^2(\frac{\varepsilon}{2}) \alpha^2 \delta_1^2 (1-\delta_2)^2} \ \Prob\left(d(S_{\mathcal{M}}) = \underline{d}\right) } \\
&= 2 e^{-2m \tanh^2(\frac{\varepsilon}{2}) \alpha^2 \delta_1^2 (1-\delta_2)^2}
\end{split}
\end{equation*}

and, therefore, noting that $\tanh(\frac{x}{2})=\frac{x}{2}-\frac{x^3}{24}+\mathcal{O}(x^5)=\mathcal{O}(x)$,
\begin{equation*}
p_2 \leq \delta_4 \beta 
\ \Leftrightarrow 
\ m \geq \underbrace{\frac{ \log(\frac{2}{\beta \delta_4}) }{2 \tanh^2(\frac{\varepsilon}{2}) \alpha^2 \delta_1^2 (1-\delta_2)^2} }_{m_2} 
= \mathcal{O}(\frac{\log(\frac{1}{\beta})}{\varepsilon^2 \alpha^2}).
\end{equation*}

By picking $m \geq \max\{m_1,m_2,m_3\}=\mathcal{O}(\frac{1}{\varepsilon^2 \alpha^2}\log(\frac{1}{\beta}))$, we ensure that
\begin{equation*}
\begin{split}
\Prob\left(| \tilde{d}-d(S) | \ \geq \ \alpha \right)\ & \leq p_1+p_2+p_3 \\
& \leq \delta_3 \beta+\delta_4 \beta+(1-\delta_3-\delta_4) \beta = \beta,
\end{split}
\end{equation*}
which completes the proof.
\end{proof}
\end{toappendix}

Asymptotically, the lower bound on the sample size required to achieve the desired approximation accuracy is not improved.
However, the bound on $p_2$ changes to
\begin{equation*}
p_2 \leq  2 e^{-2m \tanh^2(\frac{\varepsilon}{2}) \alpha^2 \delta_1^2 (1-\delta_2)^2} 
\Leftrightarrow 
m \geq \frac{ \log(\frac{2}{\beta \delta_4}) }{2 \tanh^2(\frac{\varepsilon}{2}) \alpha^2 \delta_1^2 (1-\delta_2)^2},
\end{equation*}
which, as we show, yields an improvement on the actual sample size computed by numerically solving the optimization problem described in Section \ref{subSEC_sample_size}, with the updated cost function $m(\delta_1,\delta_2,\delta_3,\delta_4)$.
The resulting proposed sample size $m$ is also illustrated in Figure \ref{FIG_sample_size_bound} as a function of the privacy budget $\varepsilon$. The proposed sample size is half an order of magnitude less than the sample size required by Algorithm \ref{DworkDE}.

\subsection{Using Continuous Distributions}
As a final remark, we note that an alternative modification to the state of Algorithm \ref{DworkDE} would be to
replace the bitarray $\mathbf{b}$ (which stores bits drawn from either of the two Bernoulli distributions $f_{\text{init}}$ and $f_{\text{upd}}$)
by an array of real numbers $\textbf{x}$, drawn from two continuous distributions.
Our motivation is that the algorithm's output is itself a real number, so by storing a flexible, real value per user (instead of a hard, binary value), we may boost accuracy.
Although the algorithm we derive fails to match the performance of Algorithm \ref{OptBernDE}, it also manages to outperform the original one and provides useful theoretical insights.
We refer the interested reader to \cite{digalakis2018thesis}.

\section{Proposed Pan-Private Density Estimator}
In this section, we propose an additional modification to Algorithm \ref{DworkDE}.
In particular, we extend the static sampling step performed by the original algorithm
to a novel, pan-private version of an adaptive sampling technique,
known as Distinct Sampling \cite{gibbons2001distinct},
which is specially tailored to the distinct count problem.
The novel Algorithm \ref{PPDS} that we derive, which we call \verb|PPDS| (Pan-Private Distinct Sampling), further improves the original one,
especially when the available sample size $m$ is small.

We remark that the modification to the sampling step
is independent of the modifications to the state of the algorithm that we have so far examined.
Hence, the new sampling technique can be combined with any of the density estimators that we have presented.

\subsection{Distinct Sampling: An Overview}
The original Distinct Sampling Algorithm, introduced by Gibbons \cite{gibbons2001distinct}, addresses general distinct count queries over data streams, that is, 
queries that estimate the number of distinct users that satisfy additional query predicates over a set of attributes other than their user ids.
Nevertheless, since our focus is simply on estimating the number of distinct users that appear in the stream (or, equivalently, the stream density),
we use a simplified version of the algorithm.

We proceed with an informal and intuitive presentation of Distinct Sampling.
Recall that, in the \emph{static sampling} approach, the density of $S$ is estimated on a random subset $\mathcal{M} \subseteq \mathcal{U}$ of users whose size $m=|\mathcal{M}|$ is fixed and computed in advance, so as to provide the desired accuracy guarantees.
Notice that $\mathcal{M}$ is sampled before the stream processing phase starts,
so we are able to maintain a size-$m$ bitarray $\mathbf{b}$, with one bit per user in $\mathcal{M}$, that determines whether this particular user has appeared in the stream.
In \emph{Distinct Sampling}, instead of a bitarray, we maintain a sample of user ids, which we call $\mathcal{M}_{\text{DS}}$.
The size $m$ of $\mathcal{M}_{\text{DS}}$ is again fixed (and can be computed in the same manner),
but now $\mathcal{M}$ is sampled \emph{adaptively}, depending on the number of distinct users that have appeared.
In particular, we work as follows.
\begin{itemize}
\item[-] Initially, we set $\mathcal{M}=\mathcal{U}$, that is,
all users are taken into account in the density estimation
and $\mathcal{M}_{\text{DS}}$ is empty.
\item[-] When the stream processing phase begins,
we insert into $\mathcal{M}_{\text{DS}}$ any user that appears for the first time and, as a result, $\mathcal{M}_{\text{DS}}$ contains all the distinct users that have appeared.
\item[-] Assuming that $\mathcal{M}_{\text{DS}}$ did not get full, which means that less than $m$ distinct users appeared, the density is computed by diving the size of $\mathcal{M}_{\text{DS}}$ by the size of $\mathcal{U}$.
\item[-] If $\mathcal{M}_{\text{DS}}$ gets full, we sample uniformly a random subset $\mathcal{M}' \subseteq \mathcal{M}$, that consists of half the users in $\mathcal{M}$.
We evict from $\mathcal{M}_{\text{DS}}$ all the users that were not selected in $\mathcal{M}'$ and,
whenever a new user appears in the stream for the first time, we insert it in $\mathcal{M}_{\text{DS}}$ only if it is also in $\mathcal{M}'$.
Whenever $\mathcal{M}_{\text{DS}}$ refills, we further shrink $\mathcal{M}'$ (randomly) to half its prior size.
\item[-] In the end, we take into account in the density estimation only the users in $\mathcal{M}'$;
if $\mathcal{M}_{\text{DS}}$ got full one time, we consider $\frac{|\mathcal{U}|}{2}$ users, if it got full two times, we consider $\frac{|\mathcal{U}|}{4}$ users, and so forth.
Therefore, in estimating the final density, we again divide the size of $\mathcal{M}_{\text{DS}}$ by the size of $\mathcal{U}$
and then properly scale the result by multiplying it with the inverse of the fraction of $\mathcal{U}$ that we ended up considering.
\end{itemize}

\paragraph*{Technical Details:}
Although a much more detailed (and technical) presentation of Distinct Sampling can be found in \cite{gibbons2001distinct},
there is a key question that we need to answer here as well:
how do we determine efficiently which users are in the set $\mathcal{M}$ over the course of the algorithm?
Explicitly storing the ids of the users in $\mathcal{M}$ requires $\mathcal{O}(|\mathcal{U}|)$ space, which is highly impractical.
The solution proposed is inspired by the hashing technique of the well-known FM-sketch \cite{flajolet1985probabilistic}.

In the next paragraphs, our presentation follows that of Gibbons \cite{gibbons2001distinct}.
For simplicity, we assume that $|\mathcal{U}|=2^Q$. 
There is a level $L$ ($0 \leq L \leq Q$) associated with the procedure, 
that is initially $0$ but is incremented each time the sample size bound $m$ is reached. 
Each user $u \in \mathcal{U}$ is mapped to a random level $\ell=\text{hash}(u)$ ($0 \leq \ell \leq Q$) using an easily computed hash function,
so that each time $u$ appears in the stream, it maps to the same level.
By the properties of the hash function, which we describe later on, a user maps to the $i^{\text{th}}$ level with probability $2^{-(i+1)}$;
thus, we expect about $\frac{|\mathcal{U}|}{2}$ users to map to level $0$, about $\frac{|\mathcal{U}|}{4}$ users to map to level $1$, and so forth.
At any time, we only retain in $\mathcal{M}_{\text{DS}}$ users that map to a level at least as large as the current level $L$
and, therefore, a $2^{-L}$ fraction of $\mathcal{U}$ qualifies to enter $\mathcal{M}_{\text{DS}}$ and is taken into account in the density estimation.
Since each user's level is chosen at random,
$\mathcal{M}_{\text{DS}}$ contains a uniform sample of the distinct users in $S$.

We now briefly describe the hash function $\text{hash: } \mathcal{U} \rightarrow \{ 0,1,...,Q \} $, introduced by Flajolet and Martin \cite{flajolet1985probabilistic} and shown to satisfy, independently for each user $u$, $\Prob ( \text{hash}(u) = i ) = 2^{-(i+1)}$, $\forall i \in \{ 0,...,Q \}$.
The hash function works in two stages.
\begin{itemize}
\item[-] First, each user is mapped uniformly at random to an integer in $\{ 0,...,2^Q-1 \}$.
This is achieved by using a linear hash function $h$, that maps a user id $u$ to $h(u) = (\alpha_{\text{h}} u + \beta_{\text{h}} ) \mod 2^Q $,
where $\alpha_{\text{h}}$ is chosen uniformly at random from $\{ 1,...,2^Q-1 \}$ and $\beta_{\text{h}}$ is chosen uniformly at random from $\{ 0,...,2^Q-1 \}$.
By constraining $\alpha_{\text{h}}$ to be odd, we guarantee that each user maps to a unique integer in $\{ 0,...,2^Q-1 \}$.
\item[-] Second, we define the function $\text{trailing\_zeros}(x)$, that inputs an integer $x$ and outputs the number of trailing zeros in the binary representation of $x$; e.g., $\text{trailing\_zeros}(12) = 2 $ (since the binary representation of $12$ is $1100$).
Then, for any user $u$, $\text{hash}(u) = \text{trailing\_zeros}( h(u) )$.
\end{itemize}

\subsection{Estimator \& Analysis}
In developing a pan-private version of Distinct Sampling, we propose the modifications illustrated in Algorithm \ref{PPDS}.
In particular, we again utilize two Bernoulli distributions, $f_{\text{init}}$ and $f_{\text{upd}}$, selected according to Section \ref{sec::bernoulli-optimal-tuning}.
The algorithm operates in two phases and,
during both phases, the Distinct Sampling procedure (increasing the level and evicting users whenever the sample size bound is reached) is faithfully followed.
During \emph{initialization}, we scan through all users, and randomly insert each of them to $\mathcal{M}_{\text{DS}}$ with probability $p_{\text{init}}$.
To avoid scanning the entire universe, we could select a random subset of $p_{\text{init}} |\mathcal{U}|$ users, compute the resulting level after -supposingly- inserting them to $\mathcal{M}_{\text{DS}}$, and add (all at once) to $\mathcal{M}_{\text{DS}}$ the ones who qualify (based on their levels).
Nevertheless, the scan will be operated only once and offline, so, in most applications, it should not be a problem.
During \emph{stream processing}, and whenever a user $u$ arrives,
if $u$ is already in $\mathcal{M}_{\text{DS}}$, we remove it from $\mathcal{M}_{\text{DS}}$ with probability $1-p_{\text{upd}}$,
whereas, if $u$ is not in $\mathcal{M}_{\text{DS}}$, we add it to $\mathcal{M}_{\text{DS}}$ with probability $p_{\text{upd}}$.

\begin{algorithm}

\caption{\texttt{PPDS}\label{PPDS}}

\DontPrintSemicolon
\SetNoFillComment

\KwIn{Data stream $S$, Privacy budget $\varepsilon$, Sample size $m$}

\KwOut{Density $\tilde{d}(S)$}

Find $Q$ such that $|\mathcal{U}| \leq 2^Q$ \;

Generate hash function $\text{hash}: \mathcal{U} \rightarrow \{ 0,...,Q \} $

Set $p_{\text{init}} = \frac{1}{2}(1-\tanh(\frac{\varepsilon}{2})))$ and $p_{\text{upd}} = \frac{1}{2}(1+\tanh(\frac{\varepsilon}{2}))$\;

\tcc{Initialization Phase}

Initialize $\mathcal{M}_{\text{DS}} = \emptyset$, $L=0$ \;

\For{each $u \in \mathcal{U}$}{

	\If{$\text{hash}(u) \geq L$}{
	
		Add $u$ to $\mathcal{M}_{\text{DS}}$ w/ prob. $p_{\text{init}}$ \;
	
	}
	\If{$|\mathcal{M}_{\text{DS}}| \geq m$}{
	
		\For{each $v \in \mathcal{M}_{\text{DS}}$}{
		
			\If{$\text{hash}(v) \leq L$}{
			
				Remove $v$ from $\mathcal{M}_{\text{DS}}$ \;			
			}
			
		}
		
		Set $L = L + 1$ \;
	
	}

}

\tcc{Stream Processing Phase}

\For{t = 1 \KwTo T}{

	\If{$s_t \in \mathcal{M}_{\text{DS}}$}{
	
		Remove $s_t$ from $\mathcal{M}_{\text{DS}}$ w/ prob. $1-p_{\text{upd}}$ \;
	
	}
	\uElse{
		
		\If{$\text{hash}(s_t) \geq L$}{
	
			Add $s_t$ to $\mathcal{M}_{\text{DS}}$ w/ prob. $p_{\text{upd}}$ \;
	
		}
		\If{$|\mathcal{M}_{\text{DS}}| \geq m$}{
	
			\For{each $v \in \mathcal{M}_{\text{DS}}$}{
		
				\If{$\text{hash}(v) \leq L$}{
				
					Remove $v$ from $\mathcal{M}_{\text{DS}}$ \;			
			
				}
			
			}
		
			Set $L = L + 1$ \;
	
		}
		
	}	
	
}

Return $\tilde{d}(S)\ = \ \frac{1}{\tanh(\frac{\varepsilon}{2})}(\frac{2^L|\mathcal{M}_{\text{DS}}|}{|\mathcal{U}|}-\frac{1}{2} + \frac{1}{2} \tanh(\frac{\varepsilon}{2}) ) \ + \ \text{Laplace}(0,\frac{2^L}{\varepsilon |\mathcal{U}|})$\;

\end{algorithm}

The privacy and accuracy analysis of Algorithm \ref{PPDS} follows the lines of our analysis of Algorithm \ref{DworkDE}.
We first argue about privacy.

\begin{thm}\label{thm_PPDS_priv}
If $\varepsilon \leq \frac{1}{2}$, then Algorithm \ref{PPDS} satisfies $2\varepsilon$-pan-privacy and utilizes all the allocated privacy budget.
\end{thm}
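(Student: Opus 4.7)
The plan is to follow the same two-stage template as in the proofs of Theorems~\ref{thm_DworkDE_priv} and~\ref{thm_OptBernDE_priv}: show that the internal state $(L,\mathcal{M}_{\text{DS}})$ is $\varepsilon$-differentially private, show that the output conditional on the state is $\varepsilon$-differentially private via the Laplace mechanism, and multiply the two factors to obtain $2\varepsilon$-pan-privacy via the factorization at the end of the proof of Theorem~\ref{thm_DworkDE_priv}. The \enquote{utilizes all the allocated privacy budget} clause is immediate: $p_{\text{init}}=\frac{1}{2}(1-\tanh(\varepsilon/2))$ and $p_{\text{upd}}=\frac{1}{2}(1+\tanh(\varepsilon/2))$ are chosen as in Section~\ref{sec::bernoulli-optimal-tuning}, so $p_{\text{upd}}/p_{\text{init}}=e^{\varepsilon}$ and $(1-p_{\text{upd}})/(1-p_{\text{init}})=e^{-\varepsilon}$ hold with equality, exactly as in the proof of Theorem~\ref{thm_OptBernDE_priv}.

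For the state DP step I would fix user-level adjacent streams $S,S'$ differing on all occurrences of some user $u$, w.l.o.g.\ with $u$ appearing in $S$ and not in $S'$, condition on the hash function (whose distribution is independent of the stream), and exploit the memorylessness of the Bernoulli update rule: whenever a user $v$ with $\text{hash}(v)\ge L_t$ is touched, the indicator $\mathbf{1}[v\in\mathcal{M}_{\text{DS}}]$ is freshly resampled, from $\text{Bernoulli}(p_{\text{init}})$ during the initialization scan and from $\text{Bernoulli}(p_{\text{upd}})$ each time $v$ subsequently arrives in the stream. Thus, conditional on $\text{hash}(u)\ge L$ at the end of the run, $\mathbf{1}[u\in\mathcal{M}_{\text{DS}}]$ is marginally $\text{Bernoulli}(p_{\text{upd}})$ under $S$ and $\text{Bernoulli}(p_{\text{init}})$ under $S'$, and the per-bit ratios already computed in the proof of Theorem~\ref{thm_OptBernDE_priv} yield exactly $e^{\pm\varepsilon}$; conditional on $\text{hash}(u)<L$, $u\notin\mathcal{M}_{\text{DS}}$ deterministically in both runs.

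The main obstacle, which has no counterpart in the static-sampling proofs, is that the level $L$ and the other users' indicators are not marginally independent of $u$'s bit: a brief inclusion of $u$ in $\mathcal{M}_{\text{DS}}$ can shift the moment at which $|\mathcal{M}_{\text{DS}}|$ first reaches $m$, change the level trajectory, and cascade into evictions of other users. I plan to address this with a coupling argument that uses the same hash function, the same initialization coin flip for every user, and the same stream-update coin flip for every event not involving $u$; a case analysis on whether each extra $u$-update in $S$ triggers a level advance that the $S'$-run either reaches one step later or avoids entirely then collapses $\Pr[\text{state}(S)=(h,L,\mathcal{M}_{\text{DS}})]/\Pr[\text{state}(S')=(h,L,\mathcal{M}_{\text{DS}})]$ to a single factor $p_{\text{upd}}/p_{\text{init}}$ or $(1-p_{\text{upd}})/(1-p_{\text{init}})$, which is at most $e^{\varepsilon}$. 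For the output step, conditional on the state the pre-noise estimator is a deterministic function of $\frac{2^L|\mathcal{M}_{\text{DS}}|}{|\mathcal{U}|}$, whose $\ell_1$-sensitivity under a single-user change in $\mathcal{M}_{\text{DS}}$ is $\frac{2^L}{|\mathcal{U}|}$, so adding $\text{Laplace}(0,\frac{2^L}{\varepsilon|\mathcal{U}|})$ yields $\varepsilon$-differential privacy of the output given the state by Theorem~\ref{defn_laplace_mech}; composing with the state bound delivers the claimed $2\varepsilon$-pan-privacy.
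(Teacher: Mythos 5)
Your overall decomposition matches the paper's exactly: state privacy of $(\text{hash},\mathcal{M}_{\text{DS}},L)$ plus Laplace output privacy conditional on the state, composed to give $2\varepsilon$-pan-privacy, with the \enquote{uses all the budget} clause settled by the equalities $p_{\text{upd}}/p_{\text{init}}=e^{\varepsilon}$ and $(1-p_{\text{upd}})/(1-p_{\text{init}})=e^{-\varepsilon}$ from Section~\ref{sec::bernoulli-optimal-tuning}. The output step is also argued the same way, via the $\ell_1$-sensitivity $2^L/|\mathcal{U}|$.

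The gap is the coupling argument, which you correctly recognize as the crux but only sketch. The paper's own proof sidesteps the issue entirely: it fixes a level $L$, restricts attention to a user $u$ with $\text{hash}(u)\geq L$, and writes the two one-bit ratios, implicitly treating $u$'s indicator as independent of the remaining state $(L,\mathcal{M}_{\text{DS}}\setminus\{u\})$. As you observe, this independence is not obvious: $u$'s bit can tip $|\mathcal{M}_{\text{DS}}|$ over $m$ at an earlier moment in the $S$-run than in the $S'$-run, advance the level prematurely, and cascade into evictions of \emph{other} users, so the two runs need not agree on $(L,\mathcal{M}_{\text{DS}}\setminus\{u\})$ even when all non-$u$ coins are shared. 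Your case analysis (\enquote{the $S'$-run either reaches the level advance one step later or avoids it entirely}) does not yet establish that, conditional on a fixed observed state $(h,\ell,M)$, the likelihood ratio collapses to a single factor bounded by $e^{\pm\varepsilon}$; once the level trajectories diverge, the set of surviving users in $\mathcal{M}_{\text{DS}}$ depends on the whole subsequent stream and the argument needs to show the trajectories re-synchronize on every coin sequence contributing to that state. Until that is carried out, e.g.\ by exhibiting a measure-preserving-up-to-one-factor bijection between the coin sequences of the two runs that yield the observed state, the state-privacy step is incomplete. Your instinct that this has no counterpart in the static-sampling proofs is correct, and you should be aware that the paper's own proof does not supply the missing argument either.
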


\begin{toappendix}
\begin{proof}[Proof of Theorem~\ref{thm_PPDS_priv}]
The proof is similar to that of Theorem \ref{thm_OptBernDE_priv}.
First, we identify the information that the state of Algorithm \ref{PPDS} consists of.
\begin{itemize}
\item[-] The hash function $\text{hash}()$, 
which (by itself) does not leak information about any user.
\item[-] The sample $\mathcal{M}_{\text{DS}}$, 
which includes the ids of users that \emph{might have appeared} 
and hence must be maintained in a differentially private manner.
\item[-] The current level $L$, 
which is determined by the number of users that are present in $\mathcal{M}_{\text{DS}}$.
Ensuring that a user's presence in $\mathcal{M}_{\text{DS}}$ does not violate its privacy (according to the differential privacy definition) also ensures that the user's privacy is protected against an adversary that observes $L$.
\end{itemize}

We thus have to prove that differential privacy is satisfied against an adversary that observes $\mathcal{M}_{\text{DS}}$.
For any (fixed) $L \in \{ 0,...,Q \}$, only $\frac{|\mathcal{U}|}{2^{L}}$ users \emph{qualify} to be added to $\mathcal{M}_{\text{DS}}$,
so perfect privacy is guaranteed for the remaining users.
Let us now fix a user $u$ who does qualify, that is,
when the adversary gets to see the internal state of the algorithm, $\text{hash}(u) \geq L$. 
Assume that $u$ appears in stream $S$ and does not appear in stream $S'$.
Then,
\begin{equation*}
\begin{split}
& \frac{\Prob\left(u \in \mathcal{M}_{\text{DS}}(S)\right)}{\Prob\left(u \in \mathcal{M}_{\text{DS}}(S')\right)} 
=
\frac{p_{\text{upd}}}{p_{\text{init}}}
=
\ e^{\varepsilon}, \\
& \frac{\Prob\left(u \not\in \mathcal{M}_{\text{DS}}(S)\right)}{\Prob\left(u \not\in \mathcal{M}_{\text{DS}}(S')\right)} 
=
\frac{1-p_{\text{upd}}}{1-p_{\text{init}}}
= 
\ e^{-\varepsilon},
\end{split}
\end{equation*}
so user $u$ is guaranteed $\varepsilon$-differential privacy against an adversary that observes $\mathcal{M}_{\text{DS}}$.
\end{proof}
\end{toappendix}

To argue about the accuracy guarantees of Algorithm \ref{PPDS},
the key thing to notice is that the level $L \in \{0,...,Q \}$ is determined by the particular input stream $S$ (realization of the input).
Therefore, for fixed $S$, $L$ is also fixed (assume $L=\ell$)
and, as a result, the size of the sample $\mathcal{M}$ can be viewed as constant and equal to $\frac{|\mathcal{U}|}{2^{\ell}}$.
The following theorem follows from an analysis identical to the one we performed in the previous section (by simply setting $m=\frac{|\mathcal{U}|}{2^{\ell}}$).

\begin{thm}\label{thm_PPDS_acc1}
For fixed $L=\ell$ and, therefore, for a fixed sample $\mathcal{M}$ that consists of $\frac{|\mathcal{U}|}{2^{\ell}}$ users, Algorithm \ref{PPDS} provides an unbiased estimate $\tilde{d}$ of the density of $S_{\mathcal{M}}$ and has mean squared error
$$\E [ \ (\tilde{d}-d(S_{\mathcal{M}}))^2 \ ] \ \leq \ \frac{2^{\ell-2}}{|\mathcal{U}| \tanh^2(\frac{\varepsilon}{2})} + \frac{2^{2\ell+1}}{|\mathcal{U}|^2 \varepsilon^2}.$$
\end{thm}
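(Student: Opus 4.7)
The plan is to reduce the analysis to that of Theorem \ref{thm_OptBernDE_acc1} by conditioning on the event $L = \ell$. Once the final level is fixed, the effective sample $\mathcal{M}$ becomes the set of users $u$ with $\text{hash}(u) \geq \ell$, whose size equals $|\mathcal{U}|/2^{\ell}$. This set plays in Algorithm \ref{PPDS} exactly the role that the uniformly random subset of size $m$ plays in Algorithm \ref{OptBernDE}.

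First I would show that, for each user $u \in \mathcal{M}$, the indicator $b_u = \mathbf{1}(u \in \mathcal{M}_{\text{DS}})$ at the end of the stream is distributed as $\text{Bernoulli}(p_{\text{init}})$ if $u \not\in S_{\mathcal{M}}$ and as $\text{Bernoulli}(p_{\text{upd}})$ if $u \in S_{\mathcal{M}}$. If $u$ never appears, the only coin flip touching $u$ is the initialization with probability $p_{\text{init}}$, which persists because $\text{hash}(u) \geq \ell$ ensures $u$ is never evicted by a level bump. If $u$ appears at least once, each appearance resamples its membership: if $u$ is currently in $\mathcal{M}_{\text{DS}}$ it is retained with probability $p_{\text{upd}}$, otherwise it is inserted with probability $p_{\text{upd}}$. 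Hence, after $u$'s first appearance, its marginal is fixed at $p_{\text{upd}}$ and remains so, regardless of how many more times $u$ appears. Independence across users follows because all the coin flips are mutually independent.

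Next I would rewrite the output estimator using $m = |\mathcal{U}|/2^{\ell}$: since $\frac{2^{\ell} |\mathcal{M}_{\text{DS}}|}{|\mathcal{U}|} = \frac{1}{m} \sum_{u \in \mathcal{M}} b_u$ and $\frac{2^{\ell}}{\varepsilon |\mathcal{U}|} = \frac{1}{\varepsilon m}$, the estimator of Algorithm \ref{PPDS} is identical in form to that of Algorithm \ref{OptBernDE}. Unbiasedness and the variance bound $\frac{1}{4 m \tanh^2(\varepsilon/2)} + \frac{2}{m^2 \varepsilon^2}$ from Theorem \ref{thm_OptBernDE_acc1} therefore transfer directly, and substituting $m = |\mathcal{U}|/2^{\ell}$ yields the claimed bound $\frac{2^{\ell - 2}}{|\mathcal{U}| \tanh^2(\varepsilon/2)} + \frac{2^{2\ell + 1}}{|\mathcal{U}|^2 \varepsilon^2}$.

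The main obstacle is the identification of the $b_u$'s with the Bernoulli bits of Algorithm \ref{OptBernDE}, rather than the variance computation itself. In particular, one must justify that the eviction/refresh dynamics (level bumps evicting users when $\mathcal{M}_{\text{DS}}$ overflows, plus per-appearance resampling) ultimately leave each $b_u$ with the clean two-state Bernoulli marginal described above once $L = \ell$ is fixed. Once that equivalence is in hand, the rest of the proof is a direct transcription of the bias and variance computations from Theorem \ref{thm_OptBernDE_acc1}.
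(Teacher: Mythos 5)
Your proof takes the same approach as the paper, which provides no explicit proof of this theorem but instead states, immediately before it, that the result ``follows from an analysis identical to the one we performed in the previous section (by simply setting $m = |\mathcal{U}|/2^{\ell}$)''; your write-up correctly fleshes out what ``identical'' means by identifying the membership indicators $b_u = \mathbf{1}(u \in \mathcal{M}_{\text{DS}})$ with the Bernoulli bits of Algorithm~\ref{OptBernDE}, and the algebraic substitution of $m = |\mathcal{U}|/2^{\ell}$ into the bound of Theorem~\ref{thm_OptBernDE_acc1} checks out. One caveat, which applies equally to the paper's terse treatment: the final level $L$ is a random variable determined by the very coin flips that set the $b_u$'s (since level bumps are triggered by $|\mathcal{M}_{\text{DS}}|$ reaching $m$), so conditioning on $L=\ell$ could in principle distort the clean $\text{Bernoulli}(p_{\text{init}})/\text{Bernoulli}(p_{\text{upd}})$ marginals you derive unconditionally; both your argument and the paper implicitly treat $\ell$ as a deterministic external parameter rather than addressing this conditional-distribution subtlety.
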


\section{Simulation Results}
In this section, we experimentally compare the algorithms we presented.
We conduct all our experiments in MATLAB.
In each experiment, we generate a stream of length $T=10^5$.
The universe is the set $\mathcal{U}=\{1,...,10^5\}$ and the stream is either uniform or zipfian (with parameter 1).
In the former case we expect to observe a stream density of $0.63$, while in the latter of $0.25$.
In what follows, we use the abbreviation
\verb|Dwork| to refer to the Conventional Pan-Private Density Estimator, that is, Algorithm \ref{DworkDE},
\verb|OptBern| to refer to the Improved Pan-Private Density Estimator, that is, the Algorithm \ref{OptBernDE}, and
\verb|PPDS| to refer to the Proposed Pan-Private Density Estimator, that is, the Algorithm \ref{PPDS}.

In our first set of experiments, we compare the first two estimators, \verb|Dwork| and \verb|OptBern|.
The evaluation metric we use is the probability $\Prob(| \tilde{d}-d(\mathcal{S}) | \geq \alpha)$, which we call \enquote{probability of error} for simplicity.
Our experiments are for fixed sample size (as a fraction of the universe size), for fixed $\alpha$ and for varying privacy budget $\varepsilon$.
Therefore, we do not take into account the input parameter $\beta$ (desired upper bound on the probability of error) to pick the proper sample size;
instead, we fix the sample size and examine how each algorithm performs in terms of the probability of error it actually achieves.
The illustrated probability of error is the empirical probability, computed over $1000$ repetitions per $\varepsilon$. 
The experimental results, which we present in Figure \ref{FIG_prob_error},
confirm the superiority of algorithm \verb|OptBern|,
regardless of the distribution of the input stream.
Another key thing to notice is that the bounds on the probability of error which we computed theoretically are not tight;
although we did compute the tightest version of the bound,
the resulting probability was significantly larger than all the empirical probabilities we plot in Figure \ref{FIG_prob_error}.

\begin{figure}
    \centering
    \includegraphics[width=\columnwidth]{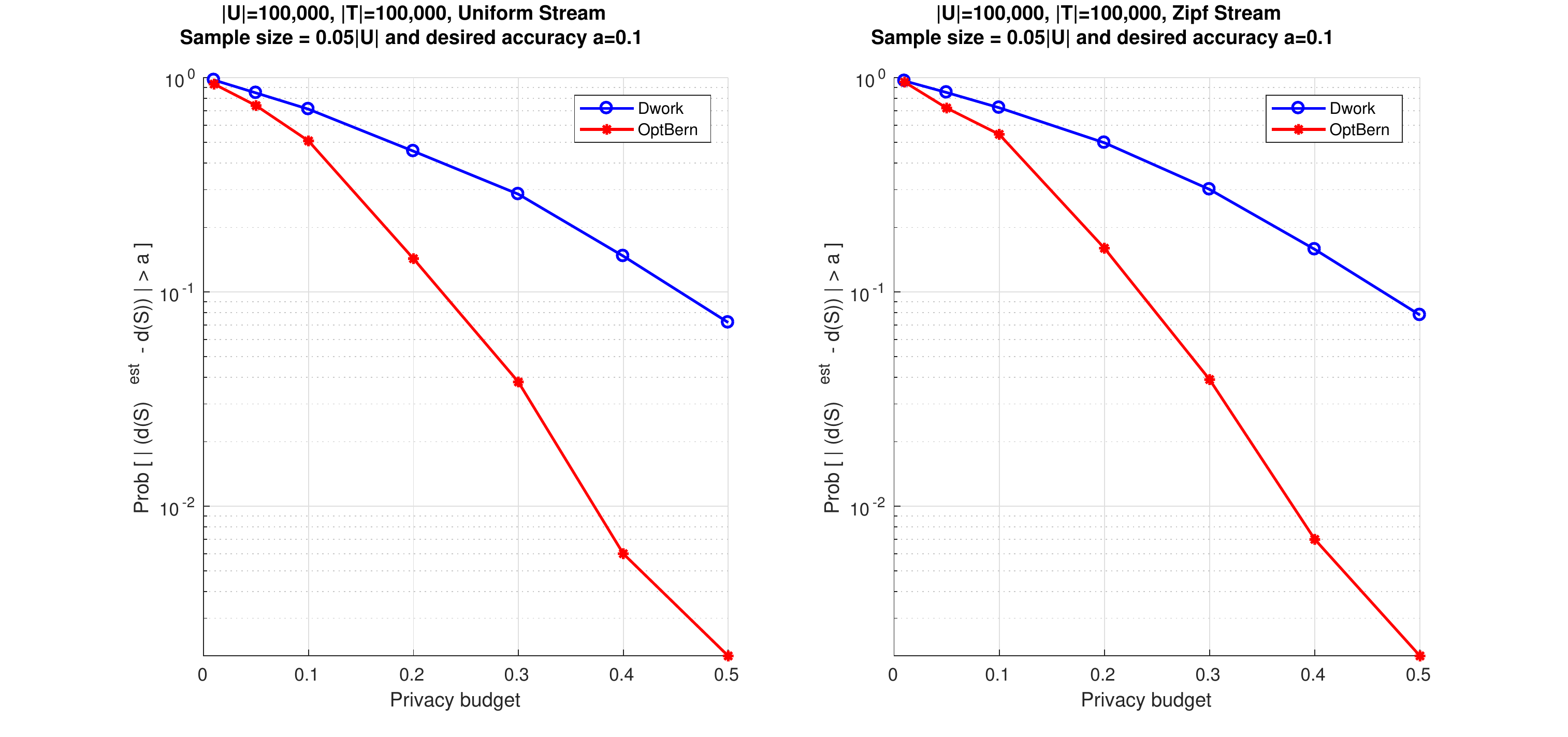}
    \caption{Empirical probability of error as a function of the privacy budget. We set $|\mathcal{U}|=100,000$, $T=100,000$, sample size $m = 0.05|\mathcal{U}|$, and desired accuracy $\alpha=0.1$.}
	\label{FIG_prob_error}
\end{figure}

Next, we compare all three estimators and examine our second evaluation metric, that is, the mean squared error (MSE) of the algorithms, as a function of the allocated privacy budget.
We remark that we again do not take into account the input parameters $\alpha,\beta$ in order to pick the proper sample size,
which we fix as a constant fraction of the universe size, and examine how each algorithm performs for varying $\varepsilon$.
For each $\varepsilon$, we independently repeat the experiment $300$ times.
In Figure \ref{FIG_MSE_theor}, we only plot the experimental MSE of algorithms \verb|Dwork| and \verb|OptBern| and compare it with their theoretical MSE.
The experimental and theoretical MSE coincide for both algorithms.
In Figure \ref{FIG_MSE}, we plot the experimental MSE of all three algorithms.
We observe that both \verb|PPDS| and \verb|OptBern| significantly outperform \verb|Dwork| and that the performance of all algorithms is robust to the input stream's distribution; the differences between uniform and zipfian are insignificant.
\verb|PPDS| seems to perform noticeably better than \verb|OptBern| when the stream is sparser (i.e., Zipf Stream) and when the allocated sample size is small compared to the universe size.

\begin{figure}
\centering
\begin{subfigure}{\columnwidth}
  \centering
  \includegraphics[width=\columnwidth]{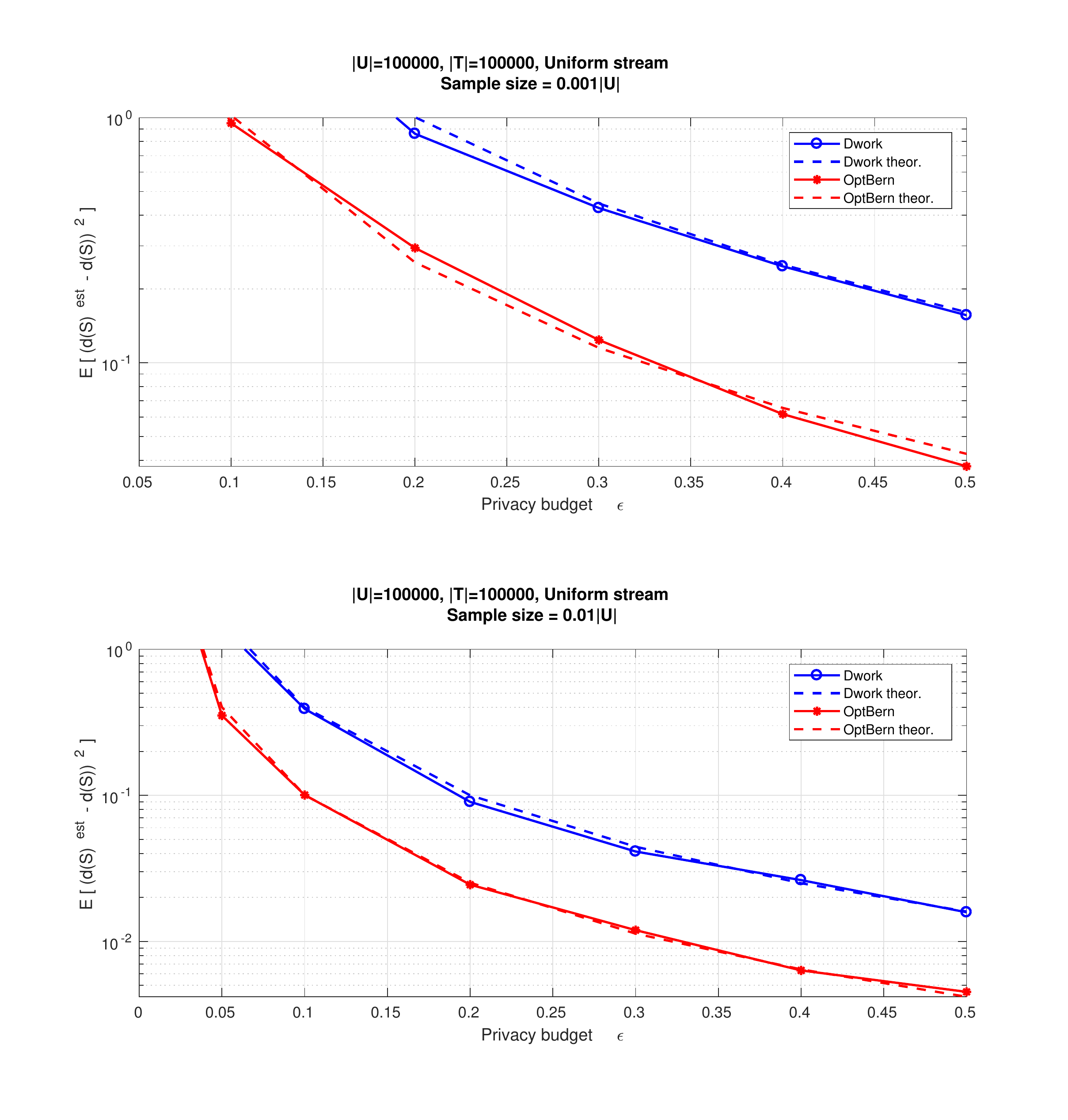}
  \caption{Uniform stream.}
\end{subfigure}\\
\begin{subfigure}{\columnwidth}
  \centering
  \includegraphics[width=\columnwidth]{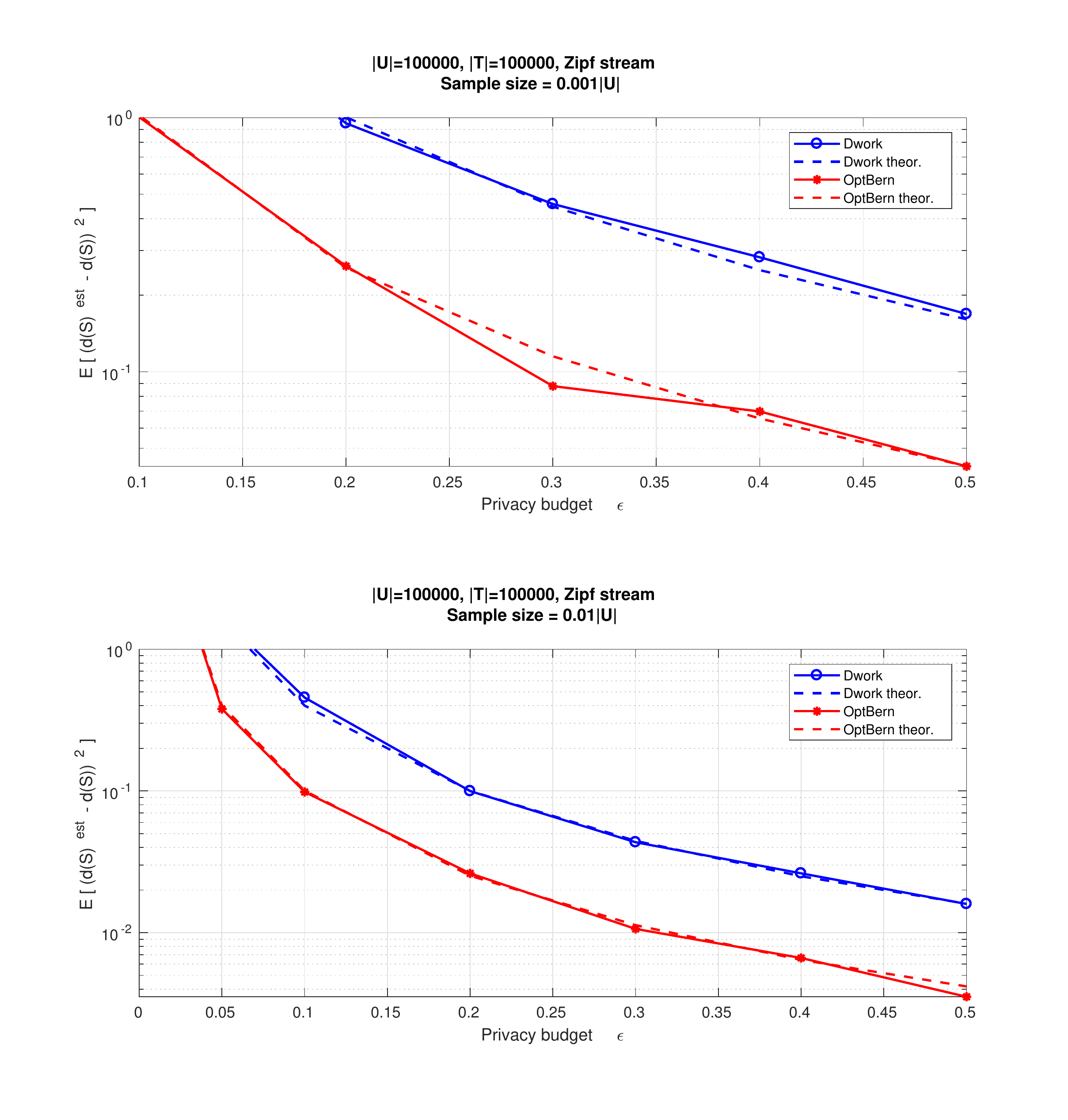}
  \caption{Zipf stream.}
\end{subfigure}
\caption{Theoretical and empirical MSE as a function of the privacy budget and for various sample sizes ($m = 0.001|\mathcal{U}|$ above, $m = 0.01|\mathcal{U}|$ below). We set $|\mathcal{U}|=100,000$, $T=100,000$.}
\label{FIG_MSE_theor}
\end{figure}

\begin{figure}
\centering
\begin{subfigure}{\columnwidth}
  \centering
  \includegraphics[width=\columnwidth]{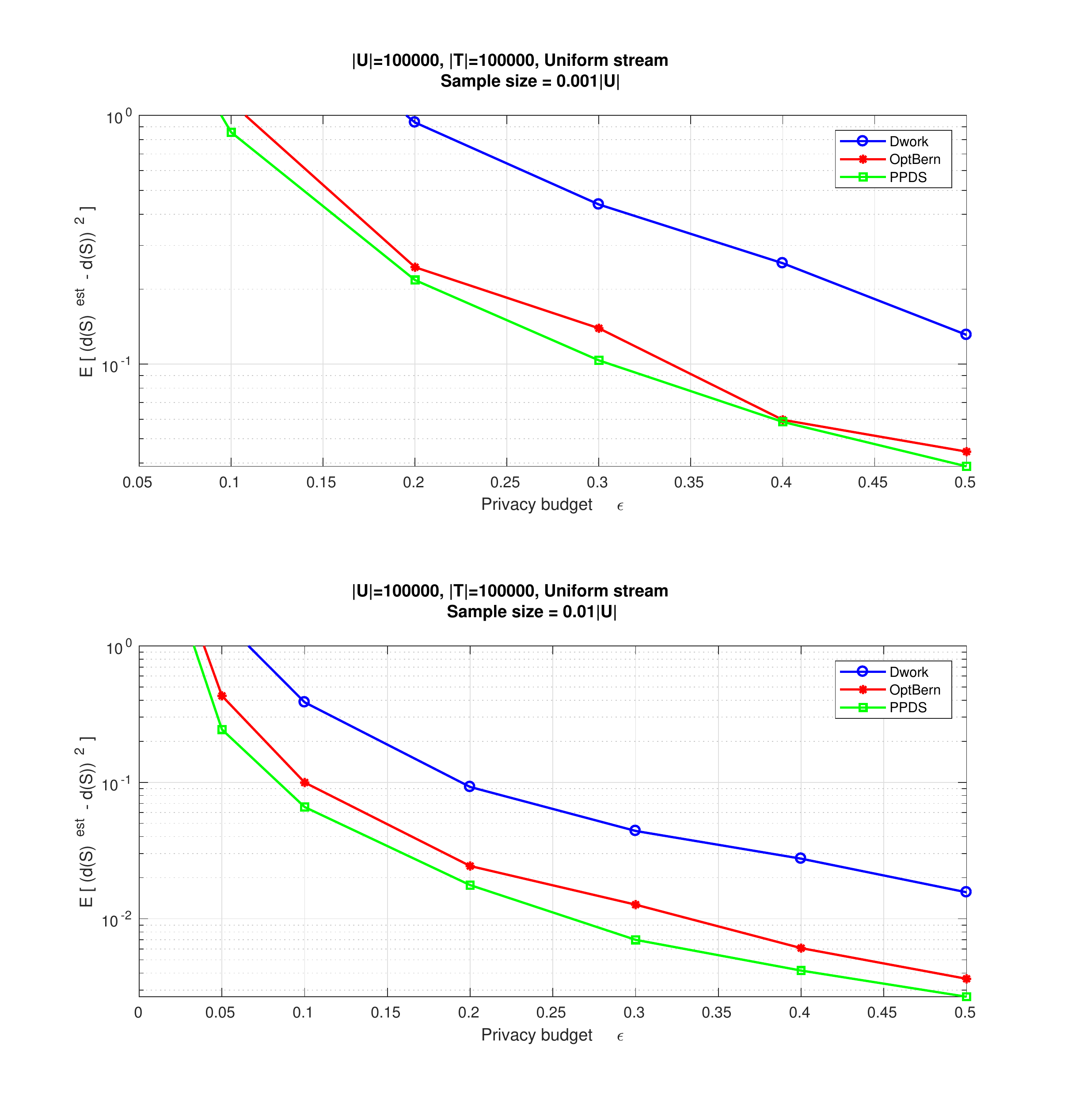}
  \caption{Uniform stream.}
\end{subfigure}\\
\begin{subfigure}{\columnwidth}
  \centering
  \includegraphics[width=\columnwidth]{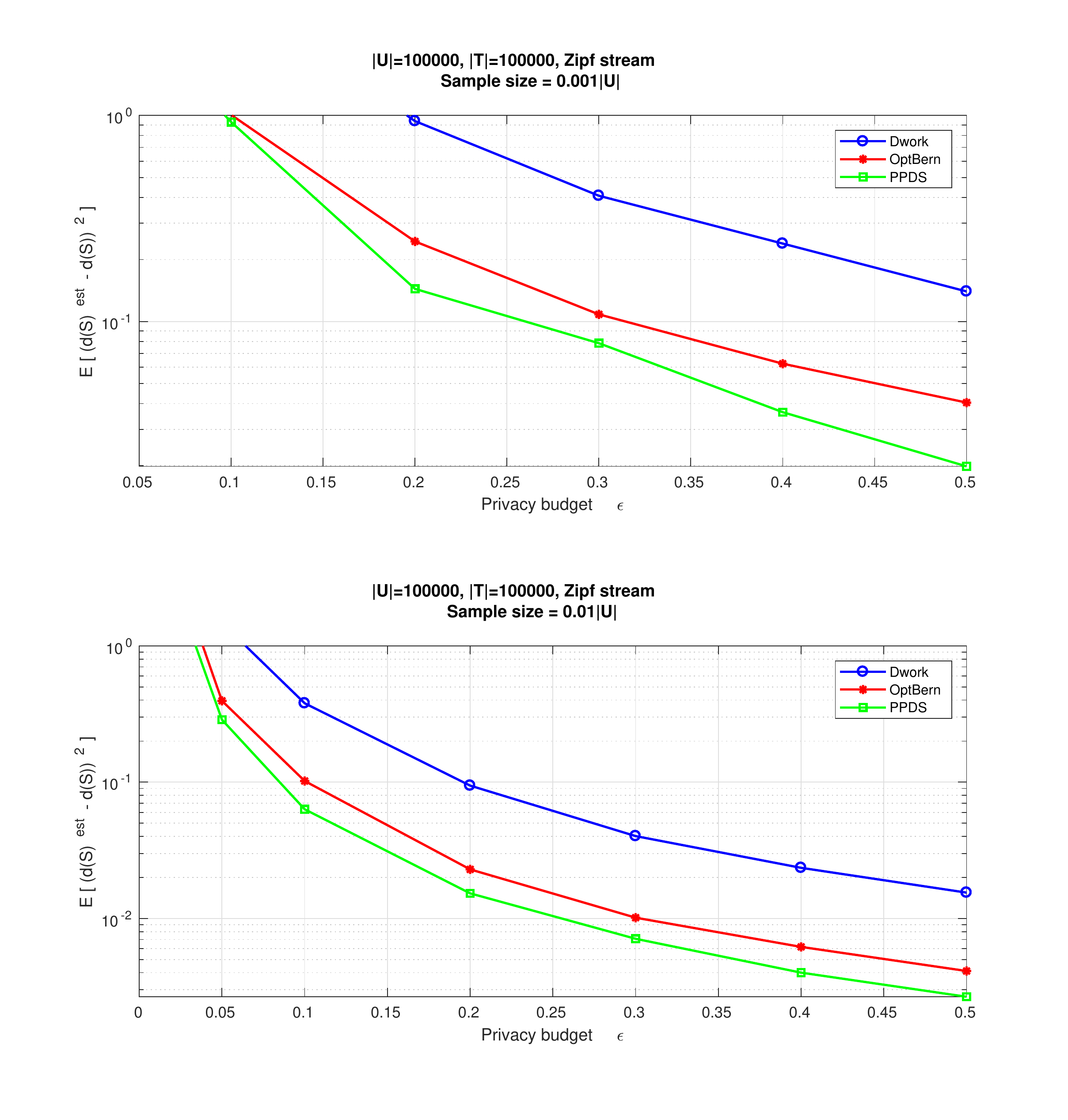}
  \caption{Zipf stream.}
\end{subfigure}
\caption{Empirical MSE as a function of the privacy budget and for various sample sizes ($m = 0.001|\mathcal{U}|$ above, $m = 0.01|\mathcal{U}|$ below). We set $|\mathcal{U}|=100,000$, $T=100,000$.}
\label{FIG_MSE}
\end{figure}

Finally, we examine the MSE of all three algorithms for varying sample size $m$.
We again ignore the input parameters $\alpha,\beta$ and fix the privacy budget $\varepsilon=0.2$.
We express the sample size as a fraction of the universe size, which we call sample percentage, 
and, for each sample percentage, we independently repeat the experiment $300$ times.
Figure \ref{FIG_MSE_sample_size} illustrates not only the superiority of our algorithms over \verb|Dwork|, but also the effect of the sample size on algorithms \verb|OptBern| and \verb|PPDS|.
For very small sample percentages (i.e., $\frac{1}{1000}$), both algorithms struggle,
but still significantly outperform \verb|Dwork|.
Above a certain threshold, the superiority of \verb|PPDS| is clear and the difference between the algorithms' MSE is maximized.
As the sample size increases, the performance of \verb|OptBern| improves and converges to that of \verb|PPDS|.
When the sample percentage is set to one, that is, no sampling is performed, the two algorithms are identical.

\begin{figure}
\centering
\begin{subfigure}{.5\columnwidth}
  \centering
  \includegraphics[width=\columnwidth]{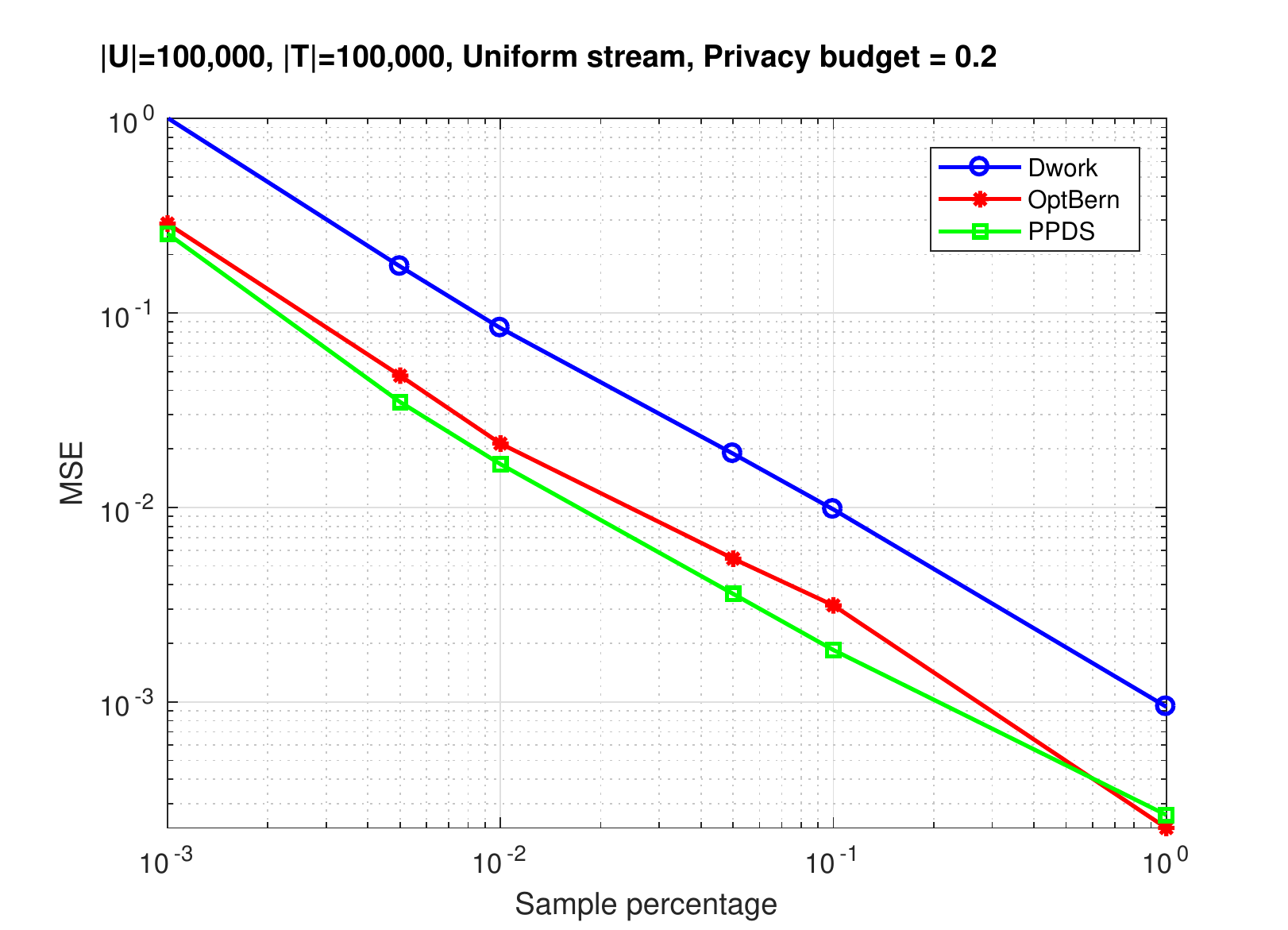}
  \caption{Uniform stream.}
\end{subfigure}%
\begin{subfigure}{.5\columnwidth}
  \centering
  \includegraphics[width=\columnwidth]{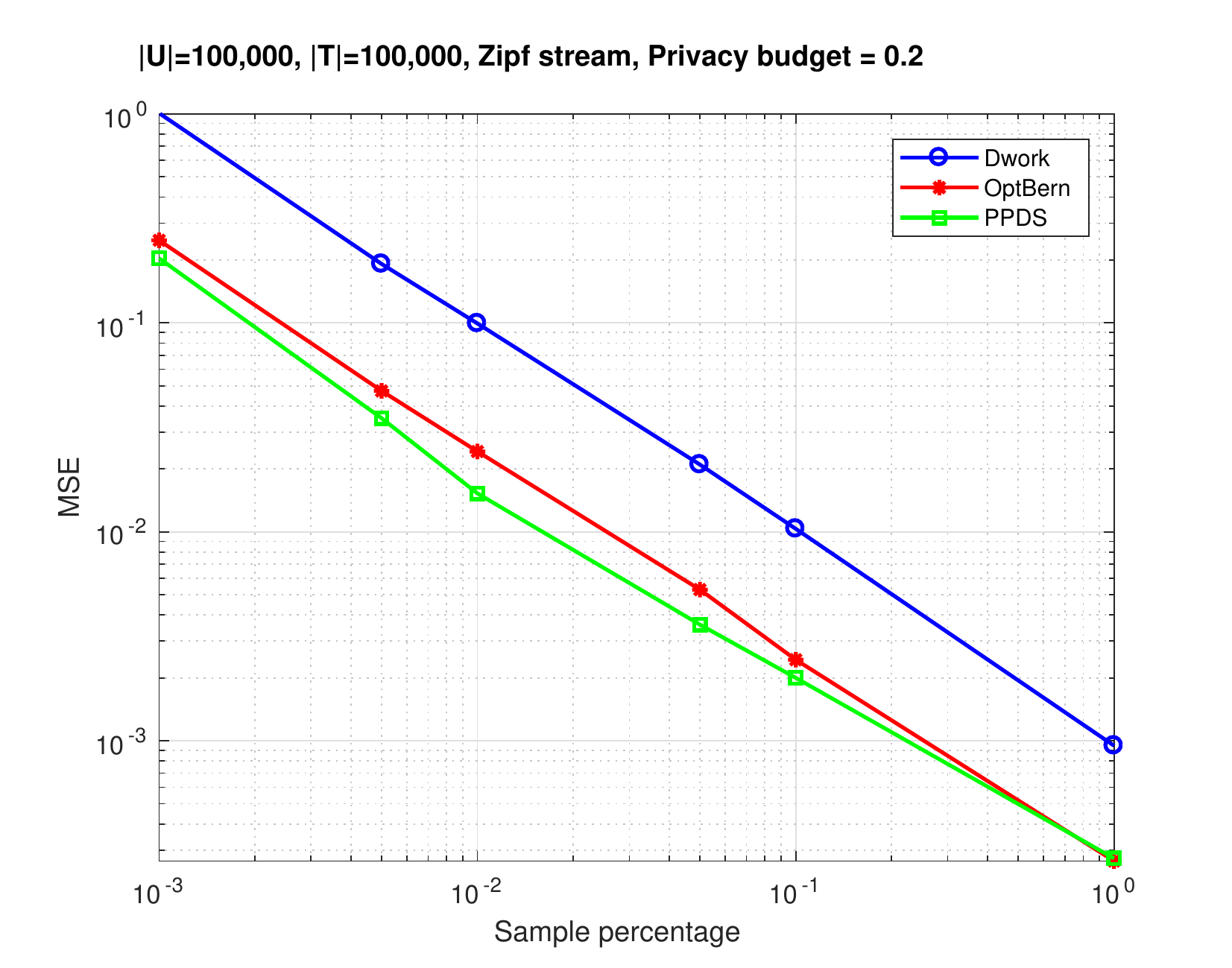}
  \caption{Zipf stream.}
\end{subfigure}
\caption{Empirical MSE as a function of the sample size. We set $|\mathcal{U}|=100,000$, $T=100,000$, $\varepsilon=0.2$.}
\label{FIG_MSE_sample_size}
\end{figure}

\section{Conclusions}
In this work, we addressed the problem of pan-private stream density estimation.
We analyzed for the first time the sampling-based pan-private density estimator proposed by Dwork et al. \cite{dwork2010pan},
and identified that it does not use all the allocated privacy budget.
We managed to outperform the original algorithm both theoretically and experimentally by proposing novel modifications
that are based on optimally tuning the Bernoulli distributions it uses
and on reconsidering the sampling step it performs.

%
%


\bibliographystyle{ACM-Reference-Format}
\bibliography{panprivacy-ref}

\end{document}